\newcommand*{\movedown}[1]{%
  \smash{\raisebox{-1ex}{#1}}}
\newcolumntype{q}{>{\collectcell\movedown}r<{\endcollectcell}}
\newtheorem{cor}{Corollary}[section]
\newtheorem{prop}{Proposition}[section]
\newtheorem{conj}{Conjecture}[section]
\newcommand*{\img}[1]{%
    \raisebox{-.3\baselineskip}{%
        \includegraphics[
        height=\baselineskip,
        width=\baselineskip,
        keepaspectratio,
        ]{#1}%
    }%
}
\newcommand{\be}{\begin{equation}}
\newcommand{\ee}{\end{equation}}
\newcommand{\ba}{\begin{eqnarray}}
\newcommand{\ea}{\end{eqnarray}}
\newcommand{\la}{\lambda}
\newcommand{\al}{\alpha}
\newcommand{\tr}{\rm tr}
\newcommand{\g}{\gamma}
\begin{document}
\setlength\dashlinedash{.6pt}
\setlength\dashlinedash{2pt}
\hoffset=-.4truein\voffset=-0.5truein
\setlength{\textheight}{8.5 in}
\begin{titlepage}
\begin{center}
\vskip 40 mm
{\large \bf The HZ  character expansion and  a hyperbolic extension of torus knots}

\vskip 10mm
{\bf {Andreani Petrou and Shinobu Hikami }}
\vskip 5mm

Okinawa Institute of Science and Technology Graduate University,\\
 1919-1 Tancha, Okinawa 904-0495, Japan.

\vskip 5mm
{\bf Abstract}
\vskip 3mm
\end{center}
The HOMFLY--PT polynomial is a two-parameter knot polynomial  that admits a character expansion, expressed as a sum of Schur functions over Young diagrams.  The Harer--Zagier (HZ) transform, which converts the HOMFLY--PT polynomial into a rational function, 
can be applied directly to the characters, yielding hence the HZ character expansion. This  illuminates the structure of the HZ functions and articulates conditions for their factorisability, including that non-vanishing contributions should come from hook-shaped Young diagrams.  An infinite  HZ-factorisable family of hyperbolic knots, that can be thought of as a hyperbolic extension of torus knots, is constructed by full twists, partial full twists and Jucys-Murphy twists, which are braid operations that preserve HZ factorisability. Among them, of interest is a family of pretzel links, which are the Coxeter links for E type Dynkin diagrams. 
Moreover, when the HZ function is non-factorisable, which occurs  for the vast majority of knots and links, we conjecture that it  can be decomposed  into a sum of factorised terms. In the 3-strand case, this is proven using the symmetries of Young diagrams.

\vskip2mm
\paragraph{Keywords.} HOMFLY--PT polynomial, Harer--Zagier transform, Weyl character formula, full twists, Jucys-Murphy elements, factorisation,  ADE singularities
 \paragraph{Declaration of competing interest.} The authors have no competing interest to disclose.

 \end{titlepage}
 \tableofcontents

\section{Introduction}
\vskip 2mm
Knots and their invariants are 
important  topological objects in three dimensions 
 that have attracted increasing attention by both mathematicians and physicists. 
 The relation between Chern--Simons gauge theory and knot polynomials has been found in \cite{Witten0}.
In this framework, the (unnormalised) HOMFLY--PT polynomial 
of a knot or link  corresponds to the gauge group $SU(N)$, and depends on two variables,  the rank of the gauge group $N$ and the quantum group parameter  $q$ \cite{Reshetikhin}. It is a  generalisation of both the Jones  and Alexander polynomials, corresponding to $N=2$ and $N=0$, respectively.   

 In the series of works \cite{MorozovCH,MorozovCHII,Anokhina},
 the authors show  that the unnormalised\footnote{Note that the standard  HOMFLY-PT polynomial $H(\mathcal{K})$ (normalised such that $H(\bigcirc)=1$) is related to (\ref{Racah1}) by 
  $  H(\mathcal{K})=\frac{z}{a-a^{-1}}\bar{H}(\mathcal{K})$.} HOMFLY--PT polynomial of a knot that has a braid representative with $m$ strands, 
 can be expressed via a character expansion, i.e. in terms of Schur functions $\hat{S}_Q$, as
 \be\label{Racah1}
 \bar{H}(\mathcal{K}) = \sum_Q h^Q \hat{
 S}_Q.
\ee
 Here the sum is over all possible Young diagrams $Q$ with $m$ boxes and $h^Q(q)$ are called the Racah coefficients. The latter  are evaluated by the trace of a product of $R-$matrices and 
 have an interesting interpretation as 6-j symbols \cite{KauffmanLins,Kirillov}. 

The Harer--Zagier (HZ) transform is a discrete  Laplace transform that can be applied to the  HOMFLY--PT polynomial of a knot $\mathcal{K}$. This can be  thought of as a generating function, denoted by $Z(\mathcal{K})$,  defined as
\be\label{HZdef}
Z(\mathcal{K};\lambda,q) = \sum_{N=0}^\infty \bar{H}(\mathcal{K}; q^N,q) \lambda^N .
\ee
 It is evaluated,  via the geometric series, by  the substitution
  \be\label{HZ2}
 q^{Nk}\to(1- \lambda q^k)^{-1}
 \ee 
 and hence its effect is to transform a polynomial in $q^N$ into a rational function that involves the ratio of polynomials in the parameters $q$ and $\lambda$. 
If both the numerator and denominator of this ratio can be expressed as the product of monomials in $\lambda$ of the form $(1\pm \lambda q^k)$,  it is said to be \emph{factorisable}. 
 It is instructive to apply the HZ transform  to 
the character expansion in (\ref{Racah1}),
since the dependence on $q^N$  appears only through the Schur functions. Hence, the HZ character expansion can be expressed as
 \be\label{CHE}
Z(\mathcal{K};\lambda,q)= \sum_Q h^{Q}(q) Z(\hat S_Q;\lambda,q).
 \ee

In our previous articles, referred to hereafter as I \cite{Petrou} and II \cite{Petrou1}, the  HZ transform and its  factorisability properties were 
investigated and several families of knots and links admitting HZ factorisability were found. Such families were generated by concatenation of a braid with full twists and Jucys--Murphy twists. In the present work, the HZ character expansion  in (\ref{CHE}) is used to articulate sufficient conditions  for the occurrence of HZ factorisability in terms of the Racah coefficients $h^Q$, and  helps to clarify why such twisting operations preserve them.

Furthermore, the character expansion  and, in particular, the structure of the Racah coefficients $h^Q$ can be used to decompose a non-factorisable HZ function  of an arbitrary knot into a sum of factorised terms. 
 A simple example with braid index $3$  is the HZ expansion of the figure-8 knot
 \ba\label{Z4_1D}
 Z(4_1;\lambda,q)&=& \frac{1}{\mathcal{D}_3}\biggl(
 \lambda h^{[3]}+ (q+q^{-1})h^{[21]} \lambda^2 + \lambda^3 h^{[111]}\biggr)\nonumber\\
 &=&\frac{1}{\mathcal{D}_3}\biggl(
 \lambda + (q^5+q^{-5}-q^3-q^{-3}-q-q^{-1})\lambda^2 + \lambda^3 \biggr)\nonumber\\
 &=& -[-5,5]+[-3,3]+[-1,1],
 \ea
 in which $\mathcal{D}_3=(1-q^{-3}\lambda)(1-q^{-1}\lambda)(1-q \lambda)(1-q^3\lambda)$, 
while the brackets in the last line are defined by  $[-n,n]:= \frac{\lambda}{\mathcal{D}_3}(1-q^{-n}\lambda)(1-q^n\lambda)$. Since these brackets have a factorised form, the final expression in (\ref{Z4_1D}) will be called a \textit{factorised form decomposition}. 
We will give examples of such decompositions for knots with  up to braid index $ 5$ in
 Sec.~\ref{sec:factform}. While an algorithmic approach to obtain the factorised form decomposition is provided for up to $8$ strands,  its existence can only be  conjectured more generally.

This paper is organised  as follows. 
In Sec.~\ref{sec:CharExp}, the   character expansion for the HOMFLY-PY polynomial is 
explained, showing that it is well suited for the study of the HZ functions  and their factorisability properties. Families of knots that can be thought of as a hyperbolic extension of torus knots are constructed.  
 The HZ transform for general knots is  decomposed  as a sum of factorised terms, as in (\ref{Z4_1D}), in Sec.~\ref{sec:factform}. In Sec.~\ref{sec:forestquiv}, Coxeter links  corresponding to Dynkin diagrams of ADE type are considered via the HZ transform, while 
  Sec.~\ref{sec:summary} is devoted to a summary and discussions. The Appendix lists the HZ factorised form decomposition   for all knots with up to 7 crossings.

\section{The HOMFLY--PT character expansion and its HZ transform}\label{sec:CharExp}

The  HOMFLY--PT polynomial of a knot with an $m$ strand braid representative with writhe $w$ admits the character expansion   \cite{MorozovCH,MorozovCHII}
\be\label{Schur}
 \bar H(\mathcal{K};A,q) = A^{-w}\sum_{Q} h^Q(q) S_Q(A,q),
\ee
where $S_Q$ is a Schur function and $Q$ is a partition of $m$, which is represented by a Young diagram with $m$ boxes.  This gives an alternative  to the  skein relation  method (used in\footnote{Note that the variables $(a,z)$ appearing in I and II are related to $(A,q)$ by $A=a^{-1}$ and $z=q-q^{-1}$.\label{foot:A=a^-1}} I and II) to derive the HOMFLY--PT polynomial of a knot.  
The coefficients $h^Q$, which  
 are called the
Racah coefficients, are functions that only depend on $q$ and are equivalent to the Wigner 6-j symbols  
\cite{Reshetikhin,And,KauffmanLins}. 
For an $m$-strand braid  $ \prod_{i=1}^\infty \prod_{j=1}^{m-1}  \sigma_j^{a_{ij}}$, where $\sigma_i$ are the generators of the braid group $B_m$, 
$h^Q$ can be explicitly computed as the trace of the product of $R-$matrices determined by the braid \cite{Anokhina}, as
\be\label{hQ}
h^Q={\rm tr} ( \prod_{i=1}^\infty \prod_{j=1}^{m-1}  R_j^{a_{ij}}).
\ee
The $R-$matrices satisfy  the $SU_q(\lfloor\frac{m}{2}\rfloor)$ algebra \cite{Anokhina}. 
For instance, for $2,3$ and $4$ strands, the quantum group $SU_q(2)$ is enough for
their evaluation, but for 5 and 6 strands, one needs to invoke the $SU_q(3)$ algebra to find the associated Racah coefficients. This will be important later in Sec.~\ref{sec:factform}.

The Schur functions\footnote{
The character $S_Q$ 
is related to the time variable $p_k=k t_k$ in KP $\tau$-function  
\cite{MorozovCH}, which is  chosen as $p_k^{*}= {\{A^k\}}/{\{q\}}$.\label{ft:qNUMBER}} $ S_Q$, which depend on both  the parameters $A=q^N$ and $q$, are expressed as
\be\label{S_Q^{hat}}
 S_Q=\prod_{(i,j)\in Q} \frac{\{A q^{i-j}\}}{\{q^{h_{i,j}}\}}.
\ee
Here $\{x\}:=x-x^{-1}$, while $(i,j)$ label the boxes of the Young diagram, with $i$ being the column index and $j$ the row index, starting from $(i,j)=(1,1)$ which corresponds to the box at the top-left corner. The denominator exponent  $h_{i,j}$ denotes the hook length, which is the number of  boxes to the right and below the box $(i,j)$, including itself.  In the classical limit, the quantum numbers $\{q^l\}$ are replaced by of ordinary numbers $l$, with which (\ref{S_Q^{hat}}) reduces to the  Weyl dimensional formula for the dimension $\delta_Q$  of the linear group representation, represented by Young diagrams as \cite{ItzyksonZuber}
\be\label{S_Qclassical}
 \delta_Q =\prod_{(i,j)}\frac{N+ i-j}{h_{i,j}}.
\ee
We also introduce the normalised version of the Schur functions
\be\label{S_Q^{*}}
S_Q^{*} = \frac{\{q\}}{\{A\}} S_Q,
\ee
which will be frequently used in the sequel since they are simpler to write down and can  directly  yield the normalised version of the  HOMFLY--PT polynomial as $H(\mathcal{K})=\frac{\{q\}}{\{A\}}\bar H=A^{-w}\sum_{Q} h^Q S_Q^*$.

Since the HZ transform (\ref{HZdef}) affects the variable $N$, it will be useful to concentrate all the $q^N=A-$dependence in the new function 
\be \label{Shat}
\hat S_Q:= A^{-w} S_Q.
\ee 
Hence, 
the HZ  transform of the HOMFLY--PT polynomial 
as given in (\ref{Racah1})  can be applied directly to $\hat S_Q$, yielding the character expansion for the HZ formula 
\be\label{prop2.2}
Z(\mathcal{K})= \sum_Q h^{Q}Z(\hat S_Q).
\ee
As we explain below 
this is a very useful formula for illuminating the factorisability properties of the  HZ function. To understand it, we considering separately each fixed number of strands.

\paragraph{2 strands.}
The (normalised) Schur functions  in the 2-strand case become   
\be S_2^{*}=S^*_{\img{2.png}} =\frac{\{ A q\}}{\{q^2\}},\;\;S_{11}^{*}=S^*_{\img{11.png}}= \frac{\{A q^{-1}\}}{\{q^2\}},\ee
while the Racah coefficients 
are $h^{[2]}=q^w$ and $h^{[11]}=(-q^{-1})^{w}$.

\vskip2mm
\noindent{Example $3_1$.}
For the  2-strand braid $\sigma_1^{3}$, where  $3=w$, whose closure is the trefoil knot $3_1$, the HOMFLY--PT polynomial can be written as  
\ba \label{H31}
 H(3_1)\hspace{-2mm}&=&\hspace{-2mm} A^{-3}(q^3 S_2^{*}- q^{-3} S_{11}^{*})= A^{-3}\left(q^3 \frac{q A- q^{-1} A^{-1}}{q^2-q^{-2}}
- q^{-3}
\frac{q^{-1}A-q A^{-1}}{q^2-q^{-2}}\right)\nonumber\\
\hspace{-2mm}&=&\hspace{-2mm} A^{-2} (q^2+q^{-2}) - A^{-4}.
\ea
For
$A=a^{-1}$ (c.f. footnote~\ref{foot:A=a^-1}), this yields  the standard expression of the HOMFLY--PT polynomial for the right handed trefoil  \cite{KnotInfo}. 
More generally, the HOMFLY--PT for all 2-stranded torus links $T(2,n)$ ($n=1,2,3,...$) has the simple character expansion \cite{MorozovCHII}
\be
H(T(2,n)) = A^{-n}\left(q^n S_2^{*}+ \left(-q^{-1}\right)^n S_{11}^{*}\right).
\ee

The HZ transform  
of the characters $\hat S_Q=A^{-w}\frac{\{A\}}{\{q\}}S_Q^{*}$ in the 2-strand case
become
\ba
Z(\hat S_{2})= \frac{q^{-w} \lambda}{\mathcal{D}_2},\;\;Z(\hat S_{11})= \frac{q^{-2w}\lambda^2}{\mathcal{D}_2},
\ea
where $\mathcal{D}_2=(1-q^{-2-w}\lambda)(1-q^{-w} \lambda)(1-q^{2-w} \lambda)$.
Hence, the HZ function for a general 2-strand braid can be always written in the factorised form
\be\label{HZ2strand}
Z=\frac{\lambda}{\mathcal{D}_2} (1+(-1)^{w}q^{-3w}\lambda).
\ee
 In fact, the only links with braid index 2  are precisely the torus links $T(2,n)$, with $w=n$, whose HZ was found in \cite{Petrou,Petrou1} to agree with (\ref{HZ2strand}) (up to $q\to q^{-1}$, due to the difference in notation, as explained in footnote \ref{foot:A=a^-1}).

\paragraph{3 strands.}
The Schur functions in this case become 
\ba\label{S}
 &&S_3^{*}=S^*_{\img{3.png}}=\{A q^2\}\{A q\}/\{q^3\}\{q^2\},
 \nonumber\\
 &&S_{111}^{*}=S^*_{\img{111.png}}= \{A q^{-2}\}\{A q^{-1}\}/\{q^3\}\{q^2\},\nonumber\\
&&S_{21}^{*}=S^*_{\img{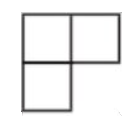}}=\{A q\}\{A q^{-1}\}/\{q^3\}\{q\}
\ea
and the Racah coefficients are $h^{[3]}=q^w$, $h^{[111]}=(-q^{-1})^{w}$, while for the Young diagram  $[21]$ they are determined by the $R-$matrices 
\ba\label{eq:Racah3}
&&R_1^{[21]}= \begin{pmatrix} q & 0\\
0& -\frac{1}{q}\end{pmatrix}=:T,\;\;R_2^{[21]}=STS^{-1}= 
\begin{pmatrix} -\frac{1}{q^2}c & s\\
s& q^2 c\end{pmatrix}.
\ea 
Here $c= (q+q^{-1})^{-1}$, $s= \sqrt{q^2+1+ q^{-2}}/(q+q^{-1})$, satisfying $c^2+s^2=1$, and $S:= \begin{pmatrix} c & s\\
s& -c\end{pmatrix}$ is an orthogonal rotation matrix, known as a Racah matrix.

\vskip 2mm
\noindent{Example  $3_1$.}
The trefoil knot   can  also be expressed as a 3-strand braid $(\sigma_1\sigma_2)^2$ with $w=4$, 
 for which $h^{[21]}= {\rm tr}( R_1 R_2 R_1 R_2 )= -1$ and hence  the character expansion becomes
\be\label{T(3,2)}
 H(3_1) =A^{-4}( q^{4} S_3^{*}- S_{21}^{*}+ q^{-4 }S_{111}^{*})=A^{-4}(-1 + A^{2}(q^2+ q^{-2})),
\ee
which, as expected, agrees with (\ref{H31}).

\vskip 2mm
\noindent{Example  $4_1$.} The figure-8  knot can be written as the closure of the braid $(\sigma_1\sigma_2^{-1})^2$ with $w=0$ and hence $h^{[3]}=h^{[111]}=1$, while we compute 
\ba\label{t1}
h^{[21]}&=&\tr(TST^{-1}S^{-1})^2={\rm tr} \left(\begin{pmatrix} q & 0\\
0& -\frac{1}{q}\end{pmatrix}
\begin{pmatrix} -\frac{1}{q^2}c & s\\
s& q^2 c\end{pmatrix}^{-1}
\right)^2\nonumber\\
&=&q^4 -2 q^2 + 1- 2 q^{-2}+ q^{-4},
\ea
Using these coefficients and the characters in  $(\ref{S})$ we find the HOMFLY--PT polynomial
\be\label{H4_1}
 H
(4_1) = S_3^{*}+S_{111}^{*}+
(q^4-2 q^2 +1 - 2 q^{-2}+q^{-4})S_{21}^{*}=  A^2- q^2-q^{-2}+1 + A^{-2}.
\ee

The HZ transform for the unnormalised Schur functions $\hat S_{Q}=A^{-w}\frac{\{A\}}{\{q\}}S^*_Q$  read
 \ba\label{ZC3_1}
 &&Z(\hat S_3)=
 \frac{q^{-w} \lambda}{\mathcal{D}_3},
 \hskip 2mm
 Z( \hat S_{21})=
 \frac{q^{-2w}(q^{-1}+q)\lambda^2}{\mathcal{D}_3},\;\; Z( \hat S_{111})=
 \frac{q^{-3w} \lambda^3}{\mathcal{D}_3},
 \ea
where the denominator is 
\be\label{D3w}
\mathcal{D}_3=(1-q^{-w-3} \lambda)(1-q^{-w-1} \lambda)(1-q^{-w+1} \lambda)(1-q^{-w+3} \lambda).
\ee 
These simple expressions are useful to understand the character expansion  (\ref{prop2.2}) of the HZ
 for a general 3-strand braid, which can be written as 
\be\label{HZ3general}
 Z=\frac{\lambda}{\mathcal{D}_3}\left(1+\lambda h^{[21]}q^{-2w}(q+q^{-1})+(-1)^{w}q^{-4w}\lambda^2\right).
\ee
\begin{prop}\label{prop:3fact}
    HZ factorisability for a knot with a  3-strand braid representative is admitted when 
the Racah coefficient $h^{[21]}$ is a  symmetric polynomial in $q$ with alternating coefficients $\pm1$, i.e. it is of the form
\be\label{h21fact}
h^{[21]}=-\frac{q^{\delta+1}+q^{-\delta-1}}{q+q^{-1}}=- q^{-\delta}+q^{-\delta+2}...\mp q^{-2}\pm 1\mp  ...+q^{\delta-2}- q^{\delta};\;\;\delta\in2\mathbb{Z}.
\ee
In this case the HZ function can be expressed in the factorised form 
\be\label{3factHZ}
Z=\frac{\lambda }{\mathcal{D}_3}(1-\lambda q^{-2w+\delta+1})(1-\lambda q^{-2w-\delta-1}).
\ee
\end{prop}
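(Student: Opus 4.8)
The plan is a direct verification: I would substitute the hypothesised form (\ref{h21fact}) of $h^{[21]}$ into the general 3-strand expression (\ref{HZ3general}) and check that the quadratic-in-$\lambda$ bracket coincides termwise with the expansion of the candidate product in (\ref{3factHZ}). Both expressions are polynomials of degree two in $\lambda$ with constant term $1$, so it suffices to match the coefficients of $\lambda$ and of $\lambda^2$.

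For the linear term, I would expand $(1-\lambda q^{-2w+\delta+1})(1-\lambda q^{-2w-\delta-1})$ and read off its $\lambda$-coefficient as $-q^{-2w}(q^{\delta+1}+q^{-\delta-1})$, to be compared with the bracket's $\lambda$-coefficient $h^{[21]}q^{-2w}(q+q^{-1})$. Their equality is exactly the defining relation (\ref{h21fact}); in other words, this step is tautological once $\delta$ is read as the integer fixed by writing $h^{[21]}$ in the stated form. I would also record here the consistency remark that $(q+q^{-1})$ divides $q^{\delta+1}+q^{-\delta-1}$ precisely when $\delta+1$ is odd, which is why $\delta\in2\mathbb{Z}$ is forced and why the quotient is the symmetric alternating $\pm1$ polynomial displayed in (\ref{h21fact}).

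The quadratic term is where the only genuine content lies. The product contributes $q^{-4w}\lambda^2$, whereas the bracket contributes $(-1)^{w}q^{-4w}\lambda^2$, the sign originating from $h^{[111]}=(-q^{-1})^{w}$; the two therefore agree if and only if $w$ is even. The key observation I would invoke is that this is automatic under the hypothesis: the proposition concerns a knot, and the closure of a 3-strand braid is a single-component link exactly when its underlying permutation in $S_3$ is a $3$-cycle, hence an even permutation. Since the sign of that permutation equals $(-1)^{w}$, a knot forces $w$ even and therefore $(-1)^{w}=1$. With this, the $\lambda^2$ coefficients match and the bracket factorises as claimed in (\ref{3factHZ}).

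Thus the main, and essentially only, obstacle is the $\lambda^2$ coefficient, and the decisive step is the parity argument identifying $w$ as even for any $3$-braid knot; everything else reduces to expanding a product and comparing it with a definition. As a sanity check I would confirm the known cases: $h^{[21]}=-1$ gives $\delta=0$ and reproduces the trefoil factorisation with $w=4$, while the figure-8 coefficient $q^{4}-2q^{2}+1-2q^{-2}+q^{-4}$ is not of the alternating $\pm1$ type, consistent with its non-factorisability and with the factorised-form decomposition (\ref{Z4_1D}).
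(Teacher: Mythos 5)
Your proposal is correct and follows essentially the same route as the paper: substitute the hypothesised $h^{[21]}$ into (\ref{HZ3general}), note that the $(q+q^{-1})$ factor cancels to give the linear term of the product, and use the evenness of $w$ for a 3-braid knot to match the $\lambda^2$ coefficient. Your explicit parity argument (a knot closure forces a 3-cycle permutation, whose sign is $(-1)^{w}=+1$) merely spells out what the paper asserts in one line.
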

\begin{proof}
     This is easily verifiable from (\ref{HZ3general}), due to the multiplication of $h^{[21]}$ by $(q+q^{-1})$ and since  the writhe $w$ is even for braid diagrams corresponding to  knots or odd component links. 
     \end{proof}
\vskip2mm
\noindent{Remark.} For links with even number of components, for which the writhe is odd,  one of the factors in (\ref{3factHZ}) should have a positive sign (c.f.~\cite{Petrou1}) and the factororisability condition should be modified accordingly.
\vskip2mm 
As mentioned in the introduction, it was found in \cite{Petrou1} that concatenations of a braid with full twists  and Jucys--Murphy twists preserve HZ factorisability. This can be understood by the following proposition.
\begin{prop}\label{prop:F3E3}
    The matrix representation corresponding to a full twist $F_3=(\sigma_2\sigma_1)^{3}$ evaluates to be the $2\times 2$ identity matrix $I_2$, i.e.
    \be
   \left(R_2^{[21]}
    R_1^{[21]}\right)^{3}=I_2.
    \ee 
    For  partial full twists 
    $F_2=\sigma_1^{2}$ we compute
         \be 
{R_1^{[21]}}^{2}=\begin{pmatrix} q^{2}&0\\
0& q^{-2}\end{pmatrix},
    \ee
    while for  Jucys-Murphy twists\footnote{   The equivalent form for the Jucys-Murphy twist  $E_3=\sigma_1\sigma_2^2\sigma_1$ has representation    $R_1^{[21]}{R_2^{[21]}}^2R_1^{[21]}=\begin{pmatrix} cq(q^2+q^{-2})&s(q-q^{-1})\\
s(q-q^{-1})& cq^{-1}(q^2+q^{-2})\end{pmatrix}$, which is not diagonal.} $\tilde{E}_3^k=(\sigma_2\sigma_1^2\sigma_2)^k$
     \be
     R_2^{[21]}{R_1^{[21]}}^2R_2^{[21]}=\begin{pmatrix} q^{-2}&0\\
0& q^{2}\end{pmatrix}.
    \ee
    The  representation corresponding to the combination $F_2^{ j}\otimes F_3^l\otimes\tilde{E}_3^k$ becomes
     \be
    \begin{pmatrix} q^{-2j-2k}&0\\
0& q^{2j+2k}\end{pmatrix}.
    \ee
     Such braid operations generate commutative subalgebras, since their matrix representations are diagonal.
\end{prop}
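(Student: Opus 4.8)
The plan is to reduce every assertion to an explicit multiplication of the two $2\times2$ matrices $R_1^{[21]}=T$ and $R_2^{[21]}=STS^{-1}$ from (\ref{eq:Racah3}), using only that $T$ is diagonal with $\det T=-1$ together with the single relation $c^2+s^2=1$. The statement for the partial full twist $F_2=\sigma_1^2$ is immediate, since $T$ is diagonal and $T^2=\mathrm{diag}(q^2,q^{-2})$.

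For the full twist I would avoid cubing $M:=R_2^{[21]}R_1^{[21]}$ directly and instead record its two scalar invariants. Conjugation preserves the determinant, so $\det R_2^{[21]}=\det T=-1$ and hence $\det M=1$; a one-line product then gives $\operatorname{tr}M=-c(q+q^{-1})=-1$ after inserting $c=(q+q^{-1})^{-1}$. The characteristic polynomial of $M$ is therefore $\lambda^2+\lambda+1$, so Cayley--Hamilton gives $M^2=-M-I_2$ and thus $M^3=M(-M-I_2)=-M^2-M=I_2$. Conceptually this outcome is forced: $F_3$ generates the centre of $B_3$ (it equals $\Delta^2$ with $\Delta=\sigma_1\sigma_2\sigma_1$), so by Schur's lemma it acts on the irreducible two-dimensional $[21]$ representation as a scalar $\zeta$ with $\zeta^2=(\det M)^3=1$, and the eigenvalues of $M$ being the primitive cube roots of unity pin down $\zeta=1$.

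For the Jucys--Murphy twist I would multiply out $R_2^{[21]}(R_1^{[21]})^2R_2^{[21]}=R_2^{[21]}\,\mathrm{diag}(q^2,q^{-2})\,R_2^{[21]}$ entrywise. The two off-diagonal entries vanish automatically as $-cs+cs$, while the relation $c^2+s^2=1$ collapses the diagonal entries $q^{-2}(c^2+s^2)$ and $q^{2}(c^2+s^2)$ to $q^{-2}$ and $q^{2}$, leaving $\mathrm{diag}(q^{-2},q^2)$. This is the only place where the vanishing of the off-diagonal block has to be checked explicitly.

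Finally, the combined representation follows formally. The three factors $(R_1^{[21]})^{2j}$, $(R_2^{[21]}R_1^{[21]})^{3l}=I_2$, and $(R_2^{[21]}(R_1^{[21]})^2R_2^{[21]})^{k}$ are all diagonal, the middle one dropping out by the full-twist computation, and products of diagonal matrices are diagonal with the exponents adding, which reproduces the stated diagonal form. Commutativity is then automatic, since any two diagonal matrices commute, so the subalgebra generated by $F_2,F_3,\tilde{E}_3$ in the $[21]$ representation is abelian. The only non-mechanical step is the full twist, where recognising the characteristic polynomial $\lambda^2+\lambda+1$ (equivalently, the central/scalar structure dictated by Schur's lemma) replaces an unilluminating direct cubing; the remaining steps are routine bookkeeping controlled by $c^2+s^2=1$.
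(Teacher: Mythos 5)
Your proposal is correct, and for the full-twist identity it takes a genuinely different route from the paper, whose entire proof is ``by direct computation using (\ref{eq:Racah3})''. Your replacement of the brute-force cubing of $M=R_2^{[21]}R_1^{[21]}$ by the two invariants $\det M=(\det T)^2=1$ and $\operatorname{tr}M=-c(q+q^{-1})=-1$, hence the characteristic polynomial $\lambda^2+\lambda+1$ and $M^3=I_2$ via Cayley--Hamilton, is both shorter and more explanatory: together with the Schur's-lemma remark it makes clear that the identity is forced by the centrality of $F_3=\Delta^2$ in $B_3$ and the irreducibility of the $[21]$ representation, rather than being a numerical accident. The computations for $F_2$, for $\tilde E_3$ (where the off-diagonal cancellation $-cs+cs=0$ and the relation $c^2+s^2=1$ are exactly the points to check), and the diagonal bookkeeping for the combination all coincide with what the paper intends. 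One caveat: your (correct) multiplication of $\mathrm{diag}(q^{2j},q^{-2j})$ by $\mathrm{diag}(q^{-2k},q^{2k})$ yields $\mathrm{diag}(q^{2j-2k},q^{-2j+2k})$, which agrees with the formula $x_{11}q^{2j-2k}+x_{22}q^{-2j+2k}$ that the paper uses immediately after the proposition, but \emph{not} with the matrix $\mathrm{diag}(q^{-2j-2k},q^{2j+2k})$ displayed in the proposition itself (the latter appears to be a typo in the exponent of $j$); you should not assert that your computation ``reproduces the stated diagonal form'' without flagging this discrepancy.
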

\begin{proof}
    By direct computation using  (\ref{eq:Racah3}).
\end{proof}

The  HZ character expansion for the closure of full twists and Jucys--Murphy braids, which are 3-component links, are easily obtained by Prop.~\ref{prop:F3E3} to be
$ Z(F_3^l)=A^{-6l}(q^{6l} Z(\hat{S}_{3}) + 2 Z(\hat{S}_{21})+ q^{-6l} Z(\hat{S}_{111}))$ and 
$ Z(\tilde{E}_3^k)=A^{-4k}(q^{4k} Z(\hat{S}_{3}) + (q^{2k}+q^{-2k})Z(\hat{S}_{21}) + q^{-4k} Z(\hat{S}_{111}))$, respectively. These do not satisfy the HZ-factorisability condition in Prop.~\ref{prop:3fact}.

However, arbitrary concatenations of $F_2$, $F_3$ and $\tilde{E}_3$ to a braid $\mathbf{b}$ that satisfies Prop.~\ref{prop:3fact}, preserve HZ-factorisability. We shall  refer to such a braid as the \emph{base braid} or, by slight abuse of language, the \emph{base knot}.
Concatenation by $F_2^{ j}\otimes F_3^l\otimes\tilde{E}_3^k$ adds  $2j+6l+4k$ to its writhe, hence mapping
\be
h^{[3]}(\mathbf{b})\to q^{2j+6l+4k}h^{[3]}(\mathbf{b}).
\ee
By Prop.~\ref{prop:F3E3}, these also leave the form of $h^{[21]}$ unaffected. 
 In particular, for a base knot  $\mathbf{b}$ which has an $R$-matrix representation with entries $\{x_{ij}\}$, we find
    \be
    h^{[21]}(\mathbf{b}\otimes F_2^{ j}\otimes F_3^l\otimes \tilde{E}_3^k)=x_{11}q^{ 2j-2k}+x_{22}q^{- 2j+2k}.
    \ee
    Hence if  $h^{[21]}(\mathbf{b})=x_{11}+x_{22}$ satisfies the  condition of Prop.~\ref{prop:3fact}, so does $h^{[21]}(\mathbf{b}\otimes F_2^j \otimes\tilde{E}_3^k \otimes F_3^l)$.

This can be used to prove the HZ-factorisability for a general family of 3-strand knots generated by these braid operations. Such a family has base braid $\mathbf{b}=\sigma_2\sigma_1$, whose closure is the unknot, and 
its matrix representation has diagonal entries $x_{11}^{[21]}=\frac{-q^{-1}}{q+q^{-1}},$ $x_{22}^{[21]}=\frac{-q}{q+q^{-1}}$ and writhe $w=2+2j+6l+4k$.
Hence
\be
h^{[21]}(\sigma_2\sigma_1\otimes F_2^{ j}\otimes F_3^l\otimes \tilde{E}_3^k)=-\frac{q^{- 2j +2k+1}+q^{ 2j-2k-1}}{q+q^{-1}}.
\ee
\vskip2mm
  \noindent{Remark.} All the 3-strand HZ-factorisable knots found in \cite{Petrou1} can be expressed as the closure of the braid $\sigma_2\sigma_1\otimes F_2^{ j}\otimes F_3^l\otimes \tilde{E}_3^k$ (which is equivalent with $\sigma_1\sigma_2^{\pm(1+2j)}\otimes F_3^l\otimes E_3^k$, with $j>0$), for some special values of $j,k,l\in\mathbb{Z}$.   This is clearly indicated for knots with up to 13 crossings in Fig.~2 of \cite{Petrou1}, while other cases include 
$T(3,2l+1,2,2j)=\sigma_2\sigma_1\otimes F_2^{j}\otimes F_3^l$ ($k=0$), 
$T(3,3l+1)\otimes \tilde{E}_3^k=\sigma_2\sigma_1\otimes F_3^l\otimes \tilde{E}_3^k$ ($j=0$) and $T(3,3l+2)\otimes \tilde{E}_3^k=\sigma_2\sigma_1\otimes F_2\otimes F_3^l\otimes \tilde{E}_3^k$ ($j=1$).  
\vskip2mm

We shall henceforth collectively denote this general HZ-factorisable family  by \be\mathcal{K}^{(3)}_{j,k,l}:=\sigma_2\sigma_1\otimes F_2^{ j}\otimes F_3^l\otimes \tilde{E}_3^k.\ee 
This is a \emph{hyperbolic extension of $3-$strand torus knots}, since the latter, which are mainly constructed by full twists, are included as the special cases $\mathcal{K}^{(3)}_{0,0,l}=\mathcal{K}^{(3)}_{s,s,l-s}=T(3,3l+1)$ and $\mathcal{K}^{(3)}_{1,0,l-1}=\mathcal{K}^{(3)}_{l-s,l-s-1,s}=T(3,3l-1)$. The inclusion of arbitrary numbers of partial full twists and Jucys--Murphy twists introduces hyperbolicity.

\vskip 2mm
\noindent{Example $5_2$.} 
The  hyperbolic knot $5_2$ corresponds to $\mathcal{K}^{(3)}_{-1,1,0}$. It can alternatively be expressed as the closure of the braid $\sigma_1^{-1}\sigma_2\sigma_1^{-1}\sigma_2^{-3}$ with 
$w=-4$,  which has Racah coefficient $h^{21}(5_2)=-(q^5+q^{-5})/(q+ q^{-1})$, that satisfies Prop.~\ref{prop:3fact}. Its HZ character expansion becomes
\be\label{Z(5_2)f}
Z(5_2) =q^{-4}Z(\hat S_3) - (q^4-q^2+1-q^{-2}+q^{-4})Z(\hat S_{21}) + q^4 Z(\hat S_{111}).
\ee 
which indeed leads to the factorisable HZ function  $Z(5_2)= \lambda (1-q^{13}\lambda)/(1-q\la)(1-q^5 \lambda)(1-q^7\lambda)$, in agreement with the result in \cite{Petrou,Petrou1}. 

\vskip2mm
\noindent{}{Remark.}
The HZ functions for the characters in (\ref{ZC3_1}) have a factorised form. 
It is interesting to note that each Young diagram gives different orders of $\lambda$. For small $\lambda$, the dominant contribution in HZ comes from $Z(\hat S_3)$. The next order $\mathcal{O}(\lambda^2)$ comes from $Z(\hat S_{21})$, while $Z(\hat S_{111})$  is of order $\lambda^3$. 
As discussed in I, the Jones
polynomial is obtained by the $\lambda^2$ term in the HZ transform. 
Using the Taylor expansion of (\ref{HZ3general}), after $q\to q^{-1}$, the term of order $\lambda^2$
yields the  unnormalised Jones polynomial $\bar{J}(q^2)=(q+q^{-1})J(q^2)= (q^{w-3}+q^{w-1}+q^{w+1}+q^{w+3})+ (q+q^{-1})q^{2w}h^{[21]}$, in which the first four terms come from  the expansion of the HZ denominator $\mathcal{D}_3$.  Hence, the Racah coefficient $h^{[21]}$ can be deduced from the Jones polynomial by
\be\label{h21a}
h^{[21]}=q^{-2w} J(q^2) - (q^{w-3}+q^{w-1}+q^{w+1}+q^{w+3}).
\ee 
This gives an alternative way to determine   $h^{[21]}$. 
For example, for $4_1$ the coefficient $h^{[21]}$  in (\ref{t1}) can be obtained via (\ref{h21a}) with $w=0$ and the Jones polynomial   $J(4_1;q^2)=q^4-q^2-q^{-2}+q^{
-4}+1$.

\paragraph{4 strands.}
There are 5  characters $S_Q^{*}$ in this case, corresponding to the  Young diagrams  $[4],[31],[22],[211],[1111]$. By (\ref{S_Q^{*}}) these read 
\ba\label{S_4}
&&S_{[4]}^{*}= 
\frac{\{A q^3\}\{A q^2\}\{A q\}}{\{q^4\}\{q^3\}\{q^2\}}, \hskip 2mm
S_{[31]}^{*}=\frac{\{A q^2\}\{A q\}\{A q^{-1}\}}{\{q^4\}\{q^2\}\{q\}}, \nonumber\\
&& S_{[22]}^{*}= \frac{\{A q\}\{A q^{-1}\}\{A \}}{\{q^3\}\{q^2\}\{q^2\}},\;\;S_{211}^{*}= \frac{\{A q^{-2}\}\{A q^{-1}\}\{A q\}}{\{q^4\}\{q^2\}\{q\}},\nonumber\\
&&\hspace{15mm}
S_{[1111]}^{*}=\frac{\{A q^{-3}\}\{A q^{-2}\}\{A q^{-1}\}}{\{q^4\}\{q^3\}\{q^2\}}.
\ea
The $q$-characters $S_Q=\{A\}/\{q\}S^*_Q$ reduce to the corresponding  ordinary numbers 
$
\delta_{[4]} = \frac{N}{24}{(N+3)(N+2)(N+1)},$ $
\delta_{[31]}=\frac{N}{8}
(N+2)(N+1)(N-1),
\delta_{[22]} = \frac{N}{12}N (N+1)(N-1),
\delta_{[211]} = \frac{N}{8} (N-2)(N-1) (N+1),$ and $
\delta_{[1111]} = \frac{N}{24}(N-3)(N-2)(N-1)$,
which agree with \cite{ItzyksonZuber}.

The Racah coefficients $h^{[31]}$ for 4-strand braids can be computed using the  $R-$matrices \cite{Anokhina}
\be\label{4-strandCH}
 R_1^{[31]}= \begin{pmatrix} q & &\\
& q&\\
& &-\frac{1}{q}\end{pmatrix}, R_2^{[31]}= U R_1^{[31]} U^{-1},\;\; R_3^{[31]}= U V R_1^{[31]} (UV)^{-1}
\ee
in which $U,V$ are the Racah matrices
 \be\label{R3}
 U= \begin{pmatrix} 1 & 0&0\\
0& c& s\\
0& -s&c\end{pmatrix},\;\;
 V= \begin{pmatrix} \frac{1}{[3]_q} & \frac{[2]_q \sqrt{q^2+q^{-2}}}{[3]_q}&0\\
-\frac{[2]_q \sqrt{q^2+q^{-2}}}{[3]_q}& \frac{1}{[3]_q}&0\\
0& 0&1\end{pmatrix},
\ee
where
$[2]_q= q+q^{-1},\;\; [3]_q = q^2+1+q^{-2}$ and $s,c$ are defined below the Eq. (\ref{eq:Racah3}).
Explicitly, the $R-$matrices can be written as
\be\label{R2R3[31]}\nonumber
R_2^{[31]}= \begin{pmatrix} q & 0&0\\
0& \frac{q}{(q^{-1} + q)^{2}} - \frac{1 + 1/q^2 + q^2}{q (q^{-1} + q)^{2}}& -\frac{\sqrt{1 + 1/q^2 + q^2}}{q (q^{-1} + q)^2)} - \frac{
 q \sqrt{1 + q^{-2} + q^2}}{(1/q + q)^2}\\
0& -\frac{\sqrt{1 + 1/q^2 + q^2}}{q (1/q + q)^2} - \frac{
 q \sqrt{1 + 1/q^2 + q^2}}{(1/q + q)^2}
&-\frac{1}{q (1/q + q)^2} + \frac{q (1 + 1/q^2 + q^2)}{(1/q + q)^2}
\end{pmatrix},
\ee
\be\label{R2[31]}
R_3^{[31]}=\begin{pmatrix} - \frac{1}{q(1-q+q^2)(1+q+q^2)} & - \frac{q(1+q^2)\sqrt{q^{-2}+q^2}}{(1-q+q^2)(1+q+q^2)}& 0\\
- \frac{q(1+q^2)\sqrt{q^{-2}+q^2}}{(1-q+q^2)(1+q+q^2)}& \frac{q^5}{(1-q+q^2)(1+q+q^2)} &0\\
0&0&q
\end{pmatrix}.
\ee
These satisfy 
(for $n\ge 1$)
\ba
{\rm det}R_1^{[31]}&=&{\rm det}R_2^{[31]}={\rm det}R_3^{[31]}= -q,\nonumber\\
{\rm tr}\big(R_i^{[31]}\big)^n&=&
2 q^n + (-q^{-1})^n;\;\;i=1,2,3\nonumber\\
{\rm tr}\left(R_1^{[31]}\right)\cdot {\rm tr}\left((R_2^{[31]})^{-1}\right)-1&=& {\rm tr}\left(R_1^{[31]}\cdot (R_2^{[31]})^{-1}\right)+ {\rm tr}\left((R_1^{[31]})^{-1}\cdot R_2^{[31]}\right)\nonumber\\
&=& 4- \frac{2}{q^2}- 2 q^2.
\ea
 The $R-$matrices corresponding to the Young diagram $Q=[22]$ are
 \be\nonumber
 R_1^{[22]} = \begin{pmatrix} q & 0\\
0& -\frac{1}{q}\end{pmatrix},\hskip 2mm  R_3^{[22]}=R_1^{[22]},
\ee
\be\label{R22a}
R_2^{[22]}=U^{[22]}R_1^{[22]}(U^{[22]})^{-1} =
\begin{pmatrix} -c q^{-2} & -s\\
-s& q^2 c\end{pmatrix}=\begin{pmatrix} -\frac{1}{q^2[2]_q} & -\frac{\sqrt{[3]_q}}{[2]_q}\\
  -\frac{\sqrt{[3]_q}}{[2]_q}& \frac{q^2}{[2]_q}\\
\end{pmatrix}.
\ee
The Racah coefficients for  a 4-strand braid $\prod_i\sigma_1^{a_i}\sigma_2^{b_i}\sigma_3^{c_i}$ with writhe $w=\sum_i (a_i + b_i +c_i)$, are expressed in terms of the above matrices as
$$h^Q={\rm tr} \bigg[ \prod_i(R_1^Q)^{a_i} (R_2^Q)^{b_i} (R_3^Q)^{c_i} \bigg].$$
Since the Young diagram $[211]$ is the mirror of $[31]$, the  coefficient $h^{[211]}$ is obtained from $h^{[31]}$  by the replacement $q\to - q^{-1}$. The same holds for $[4]$ and $[1111]$, for which $h^{[4]}= q^{w}$ and $h^{[1111]}= (- q^{-1})^w$, as before.
\vskip 2mm
\noindent{Example $6_1$.}
A braid for $6_1$ is $\sigma_1^{-1}\sigma_2\sigma_3^{-1}\sigma_1^{-1}\sigma_2\sigma_3^2$. 
The coefficient $h^{[22]}$, using (\ref{R22a}),  becomes $h^{[22]}=q-q^{-1}$, 
while $h^{[31]}$ is evaluated from (\ref{4-strandCH}) and (\ref{R2[31]}) as $
 h^{[31]}=q^{-5}-q^{-3}-q^{-1}+q- 2 q^{3}+q^{5}.$ 
 The HOMFLY--PT polynomial of $6_1$ is obtained by the character expansion to be
\ba\label{CH6_1}
 H(6_1)&=&A^{-1}( q S_4^{*}+ (q^{-5}-q^{-3}-q^{-1}+q- 2 q^{3}+q^{5}) S_{31}^{*}
 +(q-q^{-1})S_{22}^{*}\nonumber\\
 &&+ (-q^5+  q^3+q-q^{-1}+2 q^{-3}-q^{-5}) S_{211}^{*}-q^{-1}S_{1111}^{*})\nonumber\\
 &=&A^{-4}+A^{2}+(1 - q^{-2} - q^2)A^{-2}-  q^{-2}(-1 + q^2)^2,
 \ea 
 in agreement with \cite{KnotInfo} (with $A=a^{-1}$). 
The HZ transform of the characters $\hat S_Q$, takes the form
\ba\label{4-strandHZ}
&&Z(\hat S_{4})= \frac{q^{-w} \lambda}{\mathcal{D}_4},\hskip 2mm
Z(\hat S_{31})= \frac{q^{-2w}\lambda^2}{\mathcal{D}_4} (q^{-2}+1+q^2),
\nonumber\\
&&\hspace{12mm}Z(\hat S_{211})=  \frac{q^{-3w}\lambda^3}{\mathcal{D}_4}  (q^{-2}+1+q^2), \nonumber\\
&&\hspace{-5mm}Z(\hat S_{22})= \frac{1}{\mathcal{D}_4}q^{-2w} \lambda^2 (1+ q^{-w} \lambda),\;\;Z(\hat S_{1111})= \frac{q^{-4w} \lambda^4}{\mathcal{D}_4},
\ea
where the denominator $\mathcal{D}_4$, for arbitrary $w$, reads
\be\label{D4w}
\mathcal{D}_4 = (1-q^{-w-4}\lambda)(1-q^{-w-2}\lambda)(1-q^{-w} \lambda)(1-q^{-w+2}
\lambda)(1-q^{-w+4} \lambda).
\ee
Hence the HZ character expansion for  4-strand braids becomes
\ba\label{Z4}\nonumber
 Z&=&\frac{\lambda }{\mathcal{D}_4}\bigg(1+\lambda q^{-2w} h^{[31]}(q^2+1+q^{-2})+\lambda q^{-2w} h^{[22]}(1+q^{-w}\lambda)\\&&\hspace{9mm}+\lambda^2 q^{-3w}h^{[211]}
 (q^2+1+q^{-2})
 +(-1)^{w}q^{-5w}\lambda^3\bigg).
\ea
\begin{prop}\label{prop:4fact}
Factorisability of the HZ function of a 4-strand knot  occurs  when
\be\label{4condition}
h^{[22]}=0,\;\;h^{[31]}=-\frac{q^{\gamma_1}+q^{\gamma_2}+q^{\gamma_3}}{q^2+1+q^{-2}},
\ee for some odd integers $\gamma_{i}$ satisfying $\gamma_1+\gamma_2+\gamma_3=w$. 
These yield the factorised HZ function 
\be\label{Z4fact}
 Z=\frac{\lambda} {\mathcal{D}_4}(1-\lambda q^{-2w+\gamma_1})(1-\lambda q^{-2w+\gamma_2})(1-\lambda q^{-2w+\gamma_3}).
 \ee
 \end{prop}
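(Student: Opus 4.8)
The plan is to verify the statement by direct substitution into the 4-strand HZ character expansion (\ref{Z4}), in the same spirit as the proof of Prop.~\ref{prop:3fact}, and then to match the resulting polynomial in $\lambda$ against the claimed product by comparing elementary symmetric functions. First I would impose $h^{[22]}=0$, which annihilates the entire contribution $\lambda q^{-2w}h^{[22]}(1+q^{-w}\lambda)$ in (\ref{Z4}); this is exactly the term responsible for the anomalous factor $(1+q^{-w}\lambda)$ in $Z(\hat S_{22})$ that would otherwise obstruct factorisation, so removing it leaves a genuine cubic polynomial in $\lambda$ inside the bracket.

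Next I would rewrite the surviving coefficients in clean monomial form. Substituting the assumed shape of $h^{[31]}$ gives $q^{-2w}h^{[31]}(q^2+1+q^{-2})=-q^{-2w}(q^{\gamma_1}+q^{\gamma_2}+q^{\gamma_3})$. The coefficient $h^{[211]}$ is not independent: as noted below (\ref{R22a}), it is obtained from $h^{[31]}$ by the mirror replacement $q\to-q^{-1}$. Because each $\gamma_i$ is odd, one has $(-q^{-1})^{\gamma_i}=-q^{-\gamma_i}$, while the denominator $q^2+1+q^{-2}$ is invariant under $q\to-q^{-1}$; hence $h^{[211]}(q^2+1+q^{-2})=q^{-\gamma_1}+q^{-\gamma_2}+q^{-\gamma_3}$, and the $\lambda^2$ coefficient becomes $q^{-3w}(q^{-\gamma_1}+q^{-\gamma_2}+q^{-\gamma_3})$.

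I would then expand the claimed product, writing $a_i=q^{-2w+\gamma_i}$ so that $(1-\lambda a_1)(1-\lambda a_2)(1-\lambda a_3)=1-e_1\lambda+e_2\lambda^2-e_3\lambda^3$ with $e_1,e_2,e_3$ the elementary symmetric polynomials of the $a_i$. The constraint $\gamma_1+\gamma_2+\gamma_3=w$ does all the work: it gives $e_1=q^{-2w}\sum_i q^{\gamma_i}$, so the $\lambda$-coefficients agree; via $\gamma_i+\gamma_j=w-\gamma_k$ it collapses $e_2=q^{-4w}\sum_{i<j}q^{\gamma_i+\gamma_j}$ to $q^{-3w}\sum_k q^{-\gamma_k}$, matching the $\lambda^2$-coefficient found above; and it yields $e_3=q^{-5w}$. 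Comparing the top term then requires $(-1)^{w}q^{-5w}=-e_3=-q^{-5w}$, i.e.\ $w$ odd.

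The only genuinely nontrivial point — and the step I expect to need the most care — is this parity of the writhe. Since the closure is a knot, the underlying permutation of the 4-strand braid is a 4-cycle, hence an odd permutation; as the parity of $w=\sum a_{ij}$ equals the parity of the number of braid letters $\sum|a_{ij}|$, which in turn equals the parity of that permutation, we conclude $w$ is odd, so $(-1)^w=-1$ and the $\lambda^3$ terms agree. This is the 4-strand analogue of the evenness of $w$ used in Prop.~\ref{prop:3fact}, and it is precisely what makes the hypothesis ``$\gamma_i$ odd with $\sum_i\gamma_i=w$'' internally consistent. With all four coefficients matched, (\ref{Z4}) collapses to (\ref{Z4fact}), completing the verification; everything beyond the parity observation is routine symmetric-function bookkeeping.
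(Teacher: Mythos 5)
Your proof is correct and follows essentially the same route as the paper's: substitute the hypotheses into (\ref{Z4}), obtain $h^{[211]}$ from $h^{[31]}$ by $q\to -q^{-1}$ using the oddness of the $\gamma_i$, and match the $\lambda$-coefficients using $\sum_i\gamma_i=w$ together with $(-1)^w=-1$. The only substantive addition is that you actually justify why $w$ is odd for a 4-strand knot (the underlying permutation is a 4-cycle, so the letter count and hence the writhe is odd), a fact the paper merely asserts.
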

\begin{proof}
After  multiplying $h^{[31]}$ in (\ref{4condition}) with $q^2+1+q^{-2}$ which appears in $Z(\hat{S}_{31})$, it simplifies to $-(q^{\gamma_1}+q^{\gamma_2}+q^{\gamma_3})$. Similarly, in the case of $[211]$ it becomes $q^{-\gamma_1}+q^{-\gamma_2}+q^{-\gamma_3}$. By substituting these together with $h^{[22]}=0$ in (\ref{Z4}) and noting that in the case of knots (or links with odd number of components)\footnote{ As in the 3-strand case, for 4-strand links with even number of components, for which $w$ is odd, one of the factors in (\ref{Z4fact}) should have a positive sign and the conditions for HZ-factorisability should be modified accordingly.}  $w$ is odd, (\ref{Z4fact}) is obtained.
\end{proof}

\begin{prop}\label{prop:F4E4}
    The matrix representations corresponding to a full twist $F_4=(\sigma_3\sigma_2\sigma_1)^{4}$ is expressed in terms of the $n\times n$ identity matrices $I_n$ as
    \be
    \left(R_3^{[31]}R_2^{[31]}
    R_1^{[31]}\right)^{4}=q^{4}I_3,\;\; \left(R_3^{[22]}R_2^{[22]}
    R_1^{[22]}\right)^{4}=I_2.
    \ee
    For a partial full twist $F_3=(\sigma_2\sigma_1)^{3}$ we compute
       \be
\left(R_2^{[31]}R_1^{[31]}\right)^{3}=\begin{pmatrix} q ^{6}& \\
& I_2\end{pmatrix},\;\; \left(R_2^{[22]}
    R_1^{[22]}\right)^{3}=I_2.
    \ee
    For the Jucys-Murphy twist\footnote{For the different version of the Jucys-Murphy twist $E_4=\sigma_1\sigma_2\sigma_3^2\sigma_2\sigma_1$ it still holds that for $Q=[22]$ the product of $R-$matrices is $I_2$, however for $Q=[31]$ it no longer diagonal. The same holds for a different choice of partial full twist $(\sigma_3\sigma_2)^3$.}  $\tilde{E_4}=\sigma_3\sigma_2\sigma_1^2\sigma_2\sigma_3$, the matrix representation obtained by the product $R_3^{Q}R_2^{Q}{R_1^{Q}}^2R_2^{Q}R_3^{Q}$, becomes for the following two choices of $Q$
    \be
Q=[31]: \begin{pmatrix} q ^{-2}& \\
& q^{4}I_2\end{pmatrix},\;\; Q=[22]:I_2.
    \ee 
    Hence, the combination $F_3^{ j}\otimes F_4^l\otimes \tilde{E}_4^k$ for $j,k,l\in\mathbb{Z}$ yields
\be
Q=[31]: \begin{pmatrix} q^{6j+4l-2k}& \\
& q^{4(k+l)}I_2\end{pmatrix},\;\;Q=[22]: I_2.
    \ee
  As in the 3-strand case, these braid operation  generate commutative subalgebras.
\end{prop}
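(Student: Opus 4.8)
The plan is to reduce every claim to explicit matrix arithmetic using the three $R$-matrices for $Q=[31]$ given in (\ref{4-strandCH})--(\ref{R2[31]}) and the two for $Q=[22]$ in (\ref{R22a}), exactly as the one-line proof already announces (``By direct computation''). Since the statement is a conjunction of several matrix identities, I would organise the computation so that the structural reason behind each diagonal (or scalar) outcome is visible, rather than merely asserting the end result. First I would dispose of the $Q=[22]$ block entirely, because there the matrices are only $2\times 2$ and $R_1^{[22]}=R_3^{[22]}$, so the relevant products collapse to powers of the $2\times 2$ braid relation already analysed in the 3-strand case (Prop.~\ref{prop:F3E3}); I expect $F_4^{[22]}$, $F_3^{[22]}$ and $\tilde E_4^{[22]}$ to each give $I_2$ by the same mechanism that made $F_3^{[21]}=I_2$ there, namely that $(R_2 R_1)^3$ is central in the relevant $SU_q(2)$ representation.

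For the $Q=[31]$ block I would proceed in the order the proposition lists the operations. I would first verify the full-twist identity $(R_3^{[31]}R_2^{[31]}R_1^{[31]})^4=q^4 I_3$ by noting that a full twist is central in the braid group $B_4$, so its image must be scalar in every irreducible summand; the scalar $q^4$ is then fixed by a single eigenvalue computation, e.g. by evaluating the product on one convenient eigenvector or by comparing determinants using $\det R_i^{[31]}=-q$. Next, for the partial full twist $F_3=(\sigma_2\sigma_1)^3$ I would compute $(R_2^{[31]}R_1^{[31]})^3$ directly; here the key observation is that $R_1^{[31]}$ and $R_2^{[31]}$ act nontrivially only on the $2$-dimensional block indexed by the first two coordinates together with the mixing into the third, and the braid relation forces the cube to be $\mathrm{diag}(q^6,1,1)$. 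The Jucys--Murphy twist $\tilde E_4=\sigma_3\sigma_2\sigma_1^2\sigma_2\sigma_3$ is the genuinely new computation: I would evaluate the symmetric product $R_3^{[31]}R_2^{[31]}(R_1^{[31]})^2 R_2^{[31]}R_3^{[31]}$ and confirm it equals $\mathrm{diag}(q^{-2},q^4,q^4)$. The symmetry of the word (palindromic about $(R_1^{[31]})^2$) is what should guarantee that all off-diagonal entries cancel, and I would exploit this symmetry to avoid expanding the full $3\times3$ product term by term.

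Finally, the combined representation for $F_3^{\,j}\otimes F_4^l\otimes\tilde E_4^k$ follows by multiplying the three diagonal (respectively scalar) matrices obtained above and adding exponents: in the $[31]$ block one multiplies $\mathrm{diag}(q^{6j},1,1)$, $q^{4l}I_3$, and $\mathrm{diag}(q^{-2k},q^{4k},q^{4k})$ to get $\mathrm{diag}(q^{6j+4l-2k},\,q^{4(k+l)},\,q^{4(k+l)})$, and in the $[22]$ block all factors are $I_2$. The commutativity claim is then immediate, since diagonal matrices commute. I expect the main obstacle to be the $\tilde E_4^{[31]}$ computation: the matrices $R_2^{[31]}$ and $R_3^{[31]}$ in (\ref{R2R3[31]})--(\ref{R2[31]}) carry cumbersome radical entries $\sqrt{q^2+q^{-2}}$ and $\sqrt{1+q^{-2}+q^2}$, so verifying that the off-diagonal terms cancel requires care with these surds; the safeguard is that $U,V$ in (\ref{R3}) are orthogonal, so $R_2^{[31]}$ and $R_3^{[31]}$ are conjugates of the diagonal $R_1^{[31]}$ by orthogonal matrices, and the palindromic structure of the word should make the cancellation a consequence of that orthogonality rather than of brute-force simplification.
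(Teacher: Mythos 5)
Your proposal follows essentially the same route as the paper, whose entire proof is ``by direct computation using (\ref{4-strandCH}) and (\ref{R2R3[31]})''; you add structural scaffolding around the same matrix arithmetic, and most of it is sound and arguably more illuminating than the paper's one-liner. The full twist is central in $B_4$, so Schur's lemma forces a scalar on the irreducible block $[31]$, pinned down via $\det R_i^{[31]}=-q$ (up to a cube root of unity, fixed by the $q\to1$ limit); the partial full twist respects the block structure visible in $R_1^{[31]},R_2^{[31]}$ (equivalently the branching of $[31]$ into $[3]\oplus[21]$ under restriction), so $(R_2^{[31]}R_1^{[31]})^3=\mathrm{diag}(q^6,I_2)$ reduces to Prop.~\ref{prop:F3E3}; and the $[22]$ block does collapse to the $3$-strand computation, provided you note that $R_2^{[22]}$ differs from $R_2^{[21]}$ only by the sign of its off-diagonal entries, i.e.\ by conjugation with $\mathrm{diag}(1,-1)$, which is harmless. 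The one step that is not sound as stated is the claim that the palindromic structure of $R_3^{[31]}R_2^{[31]}(R_1^{[31]})^2R_2^{[31]}R_3^{[31]}$ ``guarantees that all off-diagonal entries cancel'': a palindromic word in symmetric matrices is only guaranteed to be \emph{symmetric}, not diagonal, so this symmetry cannot be used to skip the off-diagonal check. The diagonality of the Jucys--Murphy image (which reflects the fact that Jucys--Murphy elements act diagonally in the seminormal basis) must be verified by actually expanding that product --- which you do say you would confirm, so the proof goes through, but the labor-saving shortcut you propose for that step is illusory.
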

\begin{proof}
    By direct computation using and (\ref{4-strandCH}) and (\ref{R2R3[31]}).
\end{proof}

By Prop.~\ref{prop:F4E4} full twists $F_4^l$ and Jucys--Murphy braids $E_4^k$ or $\tilde{E}_4^k$ 
 have HZ expansion
$ Z(F_4^l)=A^{-12l}(q^{12l} Z(\hat{S}_{[4]})+3q^{4l}Z(\hat{S}_{[31]}) + 2 Z(\hat{S}_{[22]}) +3q^{-4l}Z(\hat{S}_{[211]})+ q^{-12l} Z(\hat{S}_{[1111]}))$ and $Z(E_4^k)=A^{-6k}(q^{6k} Z(\hat{S}_{[4]}) + (2q^{4k}+q^{-2k}) Z(\hat{S}_{[31]})+2Z(\hat{S}_{[22]}) +(2q^{-4k}+q^{2k})Z(\hat{S}_{[211]})+ q^{-6k} Z(\hat{S}_{[111]}))$, respectively.
Concatenations with $ F_3^j\otimes F_4^l\otimes \tilde{E}_4^k$ to a base braid $\mathbf{b}$
 changes the Racah coefficients as
\be
h^{[4]}(\mathbf{b})\to q^{12l+6j+6k}h^{[4]}(\mathbf{b}), 
\;\;h^{[22]}(\mathbf{b})\to h^{[22]}(\mathbf{b}).
\ee 
 For $Q=[31]$, if the $R$-matrix representation for $\mathbf{b}$  has entries $x^{[31]}_{ij}$, then the corresponding Racah coefficient becomes 
    \be
    h^{[31]}(\mathbf{b}\otimes F_3^j\otimes F_4^l\otimes \tilde{E}_4^k)=x_{11}^{[31]}q^{6j+4l-2k}+x_{22}^{[31]}q^{4(k+l)}+x_{33}^{[31]}q^{4(k+l)}.
    \ee
    Hence if $h^{Q}(\mathbf{b})=\sum_ix_{ii}^{Q}$ satisfies the HZ factorisability conditions of Prop.~\ref{prop:4fact} then so does $h^{Q}(\mathbf{b}\otimes F_3^j\otimes F_4^l\otimes \tilde{E}_4^k)$.
    \vskip2mm
\noindent{\bf $\bullet$ Base braid $\mathbf{b}=\sigma_3\sigma_2\sigma_1$.} For this braid,
 whose closure is the unknot, we compute $x_{11}^{[31]}=\frac{-q^{-1}}{[3]_q},$ $x_{22}^{[31]}=\frac{-q^2}{[3]_q[2]_q}$ and $x_{33}^{[31]}=\frac{-q^2}{[2]_q}$,
where 
$[3]_q=q^2+1+q^{-2}$, $[2]_q=q+q^{-1}$,
and hence for the hyperbolic extension of $4-$strand torus knots $\mathcal{K}_{j,k,l}^{(4)}=\sigma_3\sigma_2\sigma_1\otimes F_3^j\otimes F_4^l\otimes \tilde{E}_4^k$ we find
\be\label{K431}
h^{[31]}(\mathcal{K}_{j,k,l}^{(4)})=-\frac{q^{6j-2k+4l-1}+q^{4(l+k)+1}+q^{4(l+k)+3}}{q^2+1+q^{-2}}.
\ee
  The sum of the numerator exponents in (\ref{K431}) is equal to the writhe $w=3+6j+12l+6k$.
Similarly we compute $x_{11}^{[22]}=-1/[2]_q$, $x_{22}^{[22]}=1/[2]_q$, so $h^{[22]}=0$ for all members of the family,  and hence it satisfies the factorisability conditions in Prop.~\ref{prop:4fact}. Indeed this family contains torus knots  $T(4,4l+1)=\mathcal{K}_{0,0,l}^{(4)}=\mathcal{K}_{l-s,l-s,s}^{(4)}$ (when $k=j$) and $T(4,4l-1)=\mathcal{K}_{1,0,l-1}^{(4)}=\mathcal{K}_{l-s,l-s-1,s}^{(4)}$  (when $k=j-1$), and further includes $10_{132}=\mathcal{K}_{-2,0,1}^{(4)}$, 
$T(4,4l+1)\otimes E_4^k=\mathcal{K}_{0,k,l}^{(4)}$ and $T(4,4l+1,3,3j)=\mathcal{K}_{j,0,l}^{(4)}$, which were found to be factorised in \cite{Petrou1}.

\vskip 2mm
\noindent{Example  $10_{132}=\mathcal{K}^{(4)}_{-2,0,1}$.} 
Using the braid   $\sigma_1^3\sigma_2^{-1}\sigma_1^{-2}\sigma_2^{-1}\sigma_3^{-1}\sigma_2\sigma_3^{-2}$, with $w=-3$,  we find 
$h^{[22]}= 0$ 
and,   similarly,
 \be\label{h31:10_132}
 h^{[31]}=(-q^{-5}+q^{-1}-q+q^5 -q^7).
 \ee
 By multiplying $h^{[31]}$ with the factor $(q^2+q^{-2}+1)$ appearing in
 $Z(\hat S_{31})$   (\ref{4-strandHZ}), it becomes
 $-(q^9+q^{-5}+q^{-7})$. This leads to the factorised HZ function $(1-\lambda q^{15})(1-\lambda q)(1-\lambda q^{-1})/(1-\lambda q^{7})(1-\lambda q^{5})(1-\lambda q^{3})(1-\lambda q)(1-\lambda q^{-1})$, in which the numerator exponents are indeed given by $\gamma_i-2w$, in agreement with Prop.~\ref{prop:4fact}.

 \vskip 2mm
\noindent{Example $T(4,n)$.} 
For  4-strand torus knots and links $T(4,n)$, the Racah coefficients are obtained by $h^{Q}=\tr((R_3^{Q}R_2^{Q}R_1^{Q})^n)$.
When $n=4l\pm1$ is odd (corresponding to torus knots) $
h^{[31]}=- q^{n}$and $h^{[22]}=0,$
while when $n=6,10,14,...$, corresponding to torus links with 2 components (for which the HZ transform is not factorisable),
$
h^{[31]}=- q^{n} $and $h^{[22]}=2$.
When $n=4k$ is a multiple of $4$, $T(4,4k)$ correspond to full twists $F_4^k$, for which $h^{[31]}=
3 q^{n}$ and $h^{[22]}=0$, as mentioned above.

\vskip2mm
\noindent{\bf $\bullet$ Base $\mathbf{b}=\sigma_3\sigma_1^{-1}\sigma_2^{-2}\sigma_1\sigma_2^{-1}\sigma_1^{-1}$.}
For this braid corresponding to a 4-strand version of $5_2^-$ we find $x_{11}^{[31]}=\frac{-q^{-7}}{[3]_q},$ $x_{22}^{[31]}=\frac{-q^2(q^{-4}-q^{-2}+1-q^2+q^4)}{[3]_q[2]_q}$ and $x_{33}^{[31]}=\frac{-q^2(q^{-4}-q^{-2}+1-q^2+q^4)}{[2]_q}$, hence
\be
h^{[31]}(\mathbf{b}\otimes F_3^{ j}\otimes F_4^l\otimes \tilde{E}_4^k)=-\frac{q^{ 6j-2k+4l-7}+q^{4(l+k)-3}+q^{4(l+k)+7}}{q^2+1+q^{-2}}
\ee
and, similarly, $h^{[22]}=0$. Hence, by Prop.~\ref{prop:4fact} this family is also HZ-factorisable. The knot $10_{128}$ corresponds to $j=0,k=0,l=1$, while $12n_{318}$ corresponds to $j=1,k=0,l=1$. Note that the alternative braid representation of $5_2^-$, $\mathbf{b}=\sigma_3\sigma_2^{-3}\sigma_1^{-1}\sigma_2\sigma_1^{-1}$, has the same diagonal elements $x_{ii}^{Q}$ and hence yields the same results. However this does not hold for the braid representative $\mathbf{b}=\sigma_3\sigma_1^{-1}\sigma_2\sigma_1^{-1}\sigma_2^{-2}\sigma_1^{-1}$, for which $x_{ii}^{Q}$ do not satisfy the conditions in Prop.~\ref{prop:4fact}.

\vskip2mm
\noindent{Remark.}
4-strand versions of the knots $8_{20}$, $10_{125}$, etc., which are related to $5_2$ by additional 2-strand full twists $F_2$ can not serve as base knots to yield HZ factorisable families. This can be understood by the fact that, while ${R_1^{[31]}}^2$ 
is diagonal,  
\be
{R^{[22]}}^2=\begin{pmatrix} q^2 & \\
& q^{-2}\end{pmatrix},
\ee
which is not the identity and hence it does not preserve the condition $h^{[22]}=0$. 
\vskip2mm
It is important to mention that the 
 conditions in Prop.~\ref{prop:4fact} are sufficient for HZ factorisability but they are not necessary. This is indicated by the example of the HZ-factorisable torus knot $T(3,5)=10_{124}$ which can also be expressed as the closure of the braid $\sigma_3\sigma_2^{-1}\sigma_1^{-1}\otimes F_3^{-1}\otimes F_4\otimes E_4$, where $F_3^{-1}=(\sigma_2^{-1}\sigma_1^{-1})^3$, for which $h^{[31]}=-q+q^3-q^5$ and $h^{[22]}=q^{-1}-q$. Another example is the 
exceptional link $L10n_{42}\{1\}$ which was found in \cite{Petrou1} to be a special case  admitting HZ factorisability but not satisfying the HOMFLY--PT/Kauffman relation. 
With  the braid representative $\sigma_1\sigma_2^{-1}\sigma_1\sigma_3^{-1}\sigma_2^2\sigma_3^{-1}\sigma_2^{-1}\sigma_1\sigma_3$,  its Racah coefficients are computed 
to be
$h^{[31]}=-q^{-4}+2q^{-2}-3+2q^2-q^6+q^8$ and  $h^{[22]}=-q^{-4}+q^{-2}-1+q^2-q^4$, which clearly do not satisfy the conditions in Prop.~\ref{prop:4fact}, 
but due to a simplification still lead to HZ factorisation.  It is remarkable that the  $L10n_{42}\{1\}$ is the only link among the HZ-factorisable cases that can not be obtained  as the concatenation with $F_3$, $F_4$ or $\tilde{E}_4$ of a  base link that is HZ-factorisable. 

In fact, the HOMFLY--PT polynomial is not a good link invariant to describe $L10n_{42}\{1\}$, since $H(L10n_{42}\{1\})=H(L9n_{14}\{0\})$, i.e. it cannot distinguish it from the link  $L9n_{14}\{0\}$, which has braid index $3$, it is HZ factorisable  and satisfies  the  HOMFLY--PT/Kauffman relation.  This coincidence can be understood in terms of  the HZ, due to a cancellation of the factor $(1-\lambda q^{-6})$ between the numerator and denominator of $Z(L10n_{42}\{1\})$ (c.f. (\ref{D4w}) with $w=2$). 
This simplified denominator coincides with  $\mathcal{D}_3$ at $w=1$, which exactly corresponds to $L9_{n14}\{0\}$.   
 These links can be distinguished, however, by  the Kauffman polynomial and the multivariable Alexander polynomial \cite{KnotInfo}.

\paragraph{5 strands.}
The Schur functions for the 5-strand case are 
\ba\label{S_5}
&&S_{5}^{*}= \frac{\{A q^4\}\{A q^3\}\{A q^2\}\{Aq\}}{\{q^5\}\{q^4\}\{q^3\}\{q^2\}},\hskip 2mm
S_{41}^{*}=\frac{\{A q^3\}\{A q^2\}\{A q\}\{A q^{-1}\}}{\{q^5\}\{q^3\}\{q^2\}\{q\}},\nonumber\\
&&S_{32}^{*}=\frac{\{A q^2\}\{A q\}\{A \}\{A q^{-1}\}}{\{q^4\}\{q^3\}\{q^2\}\{q\}},\hskip 2mm
S_{311}^{*}=\frac{\{A q^2\}\{A q\}\{A q^{-1} \}\{A q^{-2}\}}{\{q^5\}\{q^2\}\{q^2\}\{q\}},\nonumber\\
&&\hspace{-3mm}S_{221}^{*}=\frac{\{A q^{-2}\}\{A q^{-1}\}\{A  \}\{A q\}}{\{q^4\}\{q^3\}\{q^2\}\{q\}},\hskip 2mm
S_{2111}^{*}=\frac{\{A q^{-3}\}\{A q^{-2}\}\{A  q^{-1}\}\{A q\}}{\{q^5\}\{q^3\}\{q^2\}\{q\}},\nonumber\\
&&\hspace{13mm}S_{11111}^{*}=\frac{\{A q^{-4}\}\{A q^{-3}\}\{A q^{-2}\}\{A q^{-1}\}}{\{q^5\}\{q^4\}\{q^3\}\{q^2\}}.
\ea
Their HZ transform, after the multiplication of a factor $A^{-w}\{A\}/\{q\}$, become 
\ba\label{5-strandHZ}
&&Z(\hat S_5)=\frac{q^{-w} \lambda}{\mathcal{D}_5},\hskip 2mm
Z(\hat S_{41})= \frac{q^{-2w}\lambda^2(q^2+q^{-2})(q+q^{-1})}{\mathcal{D}_5},\nonumber\\
&&Z(\hat S_{32})=\frac{q^{-2w}\lambda^2(q^{-1}+q)+q^{-3w}\lambda^3 (q^{-2}+1+q^2)}{\mathcal{D}_5},\nonumber\\
&&Z(\hat S_{311})=\frac{q^{-3w}\lambda^3 (q^2+q^{-2})(q^2+1+q^{-2})}{\mathcal{D}_5},\nonumber\\
&&Z(\hat S_{221})= \frac{q^{-3w}\lambda^3(q^{-2}+1+q^2)+q^{-4w}\lambda^4(q^{-1}+q)} {\mathcal{D}_5},
\nonumber\ea
\ba
&&Z(\hat S_{2111})=\frac{q^{-4w} \lambda^4 (q^2+q^{-2})(q+q^{-1})}{\mathcal{D}_5}, \hskip 2mm
Z(\hat S_{11111})= \frac{q^{-5w}\lambda^5}{\mathcal{D}_5},
\ea
with $\mathcal{D}_5=(1-q^{-w-5} \lambda)(1-q^{-w-3} \lambda)(1-q^{-w-1}\lambda)(1-q^{-w+1}\lambda)(1-q^{-w+3}\lambda)(1-q^{-w+5}\lambda)$. 
The Racah matrices for $[41]$ become
\ba
&&
U^{[41]}=\begin{pmatrix} 1 & & &\\
&1& & \\
&&c_2&s_2\\
&&-s_2&c_2
\end{pmatrix},\;\;V^{[41]}=\begin{pmatrix} 1 & & &\\
&c_3&s_3 & \\
&-s_3&c_3&\\
&&&1
\end{pmatrix},\nonumber\\
&&
\hspace{22mm}W^{[41]}=\begin{pmatrix} c_4 &s_4 & &\\
-s_4&c_4& & \\
&&1&\\
&&&1
\end{pmatrix},
\ea
with $c_n=\frac{1}{[n]_q}, s_n=\frac{\sqrt{[n]^2_q-1}}{[n]_q}=\frac{\sqrt{[n-1]_q[n+1]_q}}{[n]_q}, [n]_q=(q^n-q^{-n})/(q-q^{-1}).$ Using these, one can determine the $R-$matrices in terms of each other by
 \ba\label{RUVW}
&&R_2^{Q}=U^{Q}R_1^{Q}(U^{Q})^T,\;\;R_3^{Q}=U^{Q}V^{Q}R_2^{Q}(U^{Q}V^{Q})^T\nonumber\\
&&R_4^{Q}=U^{Q}V^{Q}W^{Q}R_3^{Q}(U^{Q}V^{Q}W^{Q})^T
\ea
or in terms  of $R_1$ by
\ba
R_2^{Q}&=& U^{Q} R_1 (U^{Q})^{T}, \;\;R_3^{Q}= (U^{Q}V^{Q}U^{Q})R_1^{Q} (U^{Q}V^{Q}U^{Q})^{T}\\
R_4^{Q}&=&(U^{Q}V^{Q}W^{Q}U^{Q}V^{Q}U^{Q})R_1^{Q}(U^{Q}V^{Q}W^{Q}U^{Q}V^{Q}U^{Q})^{T}.\nonumber
\ea
where $^T$ means the transpose.
Explicitly, for $Q=[41]$, these are
\ba\label{R5-strand}
R_1^{[41]}&=&  \begin{pmatrix} q & & &\\
&q& & \\
&&q&\\
&&&-\frac{1}{q}
\end{pmatrix},\;\;
R_2^{[41]}=\begin{pmatrix} q & & &\\
&q& & \\
&&-\frac{1}{q(1+q^2)}&-\frac{\sqrt{1+q^2+q^4}}{1+q^2}\\
&&-\frac{\sqrt{1+q^2+q^4}}{1+q^2}&
\frac{q^3}{1+q^2}
\end{pmatrix},\nonumber\\
 R_3^{[41]}&=& 
\begin{pmatrix} q & & &\\
&-\frac{1}{q(1-q+q^2)(1+q+q^2)}&-\frac{(1+q^2)\sqrt{1+q^4}}{(1-q+q^2)(1+q+q^2)} & \\
&-\frac{(1+q^2)\sqrt{1+q^4}}{(1-q+q^2)(1+q+q^2)}&\frac{q^5}{(1-q+q^2)(1+q+q^2)}&\\
&&
& q
\end{pmatrix},\nonumber\\
R_4^{[41]}&=& \begin{pmatrix}  
-\frac{1}{q(1+q^2)(1+q^4)}&-\frac{q^3\sqrt{-1+\frac{(q^4-q^{-4})^2}{(q-q^{-1})^2}}}{(1+q^2)(1+q^4)}&&\\
-\frac{q^3\sqrt{-1+\frac{(q^4-q^{-4})^2}{(q-q^{-1})^2}}}{(1+q^2)(1+q^4)}&\frac{q^7}{(1+q^2)(1+q^4)}&&\\
&&q&\\
&&&q
\end{pmatrix},
\ea
which satisfy 
\be
{\rm det}\left(R_l^{[41]}\right)= -q^2,\;\;\tr \left(R_l^{[41]}\right)^n = 3 q^n+ (-1)^n \frac{1}{q^n};\;\;l=1,2,3,4.
\ee
For $Q=[311]$, the Racah matrices are $6\times 6$ and  since the Young diagram has 3 vertical boxes, they correspond to the quantum  group $U_q(3)$. Explicitly \cite{MorozovCHII} 
\ba\label{Racah311}
&&
U^{[311]}=\begin{pmatrix} 1 & &&\\
&U^{(2)}&&  \\
&&U^{(2)}&\\
&&&1
\end{pmatrix},\;\;V^{[311]}=\begin{pmatrix}V^{(2)} & &\\
&1& & \\
&&-1&\\
&&&V^{(2)}
\end{pmatrix},\nonumber\\&&
\hspace{15mm}W^{[311]}=\begin{pmatrix} -1 & &&& &\\
&-c_4 &&s_4& & \\
&&-c_4 &&s_4&\\
&s_4&&c_4 &&\\&&s_4&&c_4&\\&&&&&1
\end{pmatrix},
\ea 
where $ U^{(2)}$ and $ V^{(2)}$ are the block matrices \be
 U^{(2)}:=\begin{pmatrix}
    -c_2&-s_2 \\
s_2&-c_2\end{pmatrix},\;\; V^{(2)}:=\begin{pmatrix}
     -c_3 &s_3\\s_3&c_3 \end{pmatrix}
     \ee
     and \be\label{R1311}
R_1^{[311]}= \begin{pmatrix} q & & & &&\\&q&&&&\\
&&-q^{-1}&& & \\
&&&q&\\
&&&&-q^{-1}&\\&&&&&-q^{-1}
\end{pmatrix}.
\ee
The Young diagram $Q=[32]$ has two vertical boxes, hence it corresponds to $U_q(2)$ and it is described by the $5\times 5$ Racah matrices 
\ba\label{Racah32}
&&
U^{[32]}=\begin{pmatrix} 1 & &\\
&U^{(2)}& \\
&&U^{(2)}
\end{pmatrix},\;\; 
V^{[32]}=\begin{pmatrix}V^{(2)} & &\\
&1& & \\
&&1&\\
&&&-1
\end{pmatrix},\nonumber\\
&&
\hspace{14mm}W^{[311]}=\begin{pmatrix} 1 & &&& \\
&-c_2 &&s_2& \\
&&-c_2 &&s_2\\
&s_2&&c_2 &\\&&s_2&&c_2
\end{pmatrix},
\ea 
while
\be\label{R132}
R_1^{[32]}= \begin{pmatrix} q & & & &\\&q&&&\\
&&-q^{-1}& & \\
&&&q&\\
&&&&-q^{-1}
\end{pmatrix}.
\ee

\vskip 2mm
\noindent{Example $8_3$.} 
Using the braid representative $\sigma_1^2\sigma_2\sigma_1^{-1}\sigma_3^{-1}\sigma_2\sigma_3^{-1}\sigma_4^{-1}\sigma_3\sigma_4^{-1}$ (c.f \cite{KnotInfo}), with $w=0$, using the above matrices we compute  $h^{[5]}=h^{[11111]}=1$, 
 $h^{[41]}= h^{[2111]}= q^6-q^4-q^2+1-q^{-2}-q^{-4}+q^{-6}$ and $h^{[311]}=-2 q^6 + 3 q^4-q^2 + 1 - q^{-2}+ 3 q^{-4} - 2 q^{-6}$.

\begin{prop}\label{prop:5condition}
The HZ-factorisability  conditions for 5-strand braids  are 
\ba
h^{[41]}&=&-\frac{q^{\gamma_0}+q^{\gamma_1}+q^{\gamma_2}+q^{\gamma_3}}{(q^{2}+q^{-2})(q+q^{-1})};\;\; \gamma_i\in\mathbb{Z} \;\;{\rm (odd)},\;\; {\rm s.t.} \sum_{i=0}^3\gamma_i=2w\nonumber\\
h^{[32]}&=&0,\;\;h^{[311]}=\frac{\sum_{(i,j)}q^{\gamma_i+\gamma_j-w}}{(q^2+q^{-2})(q^2+1+q^{-2})},
\ea
where $(i,j)$ refers to the 6 permutations of $i,j\in\{0,..,3\}$ with $i\neq j$.
A special case of the latter is
$h^{[311]}=\frac{q^{\delta}+q^{-\delta}}{q^2+q^{-2}}$
with   $\delta=\frac{1}{2}(\gamma_0+\gamma_2-\gamma_1-\gamma_3)$ and $\gamma_2=\gamma_1+2$, $\gamma_3=\gamma_2+2$. 
These yield
\be\label{5fact}
Z=\frac{\la}{\mathcal{D}_5}\prod_{i=0}^{3}(1-\la q^{\al_i});\;\;\al_i:=\gamma_i-2w.
\ee
\end{prop}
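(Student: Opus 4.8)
The plan is to reproduce, at one strand higher, the coefficient-matching argument that established Propositions~\ref{prop:3fact} and~\ref{prop:4fact}. Concretely, I would insert the seven building blocks $Z(\hat S_Q)$ of~(\ref{5-strandHZ}) into the character expansion~(\ref{prop2.2}), write the result as $Z=\frac{\lambda}{\mathcal{D}_5}P(\lambda)$ with $P$ a degree-four polynomial in $\lambda$, and then check that, under the stated hypotheses, $P(\lambda)=\prod_{i=0}^3(1-\lambda q^{\alpha_i})$ holds coefficient by coefficient. To keep the bookkeeping manageable I would first reduce the number of independent Racah coefficients using the conjugation (mirror) symmetry already exploited above, under which the diagram pairs $[41]\leftrightarrow[2111]$, $[32]\leftrightarrow[221]$, $[5]\leftrightarrow[11111]$ are exchanged by $q\mapsto-q^{-1}$ while $[311]$ is self-conjugate; this gives $h^{[2111]}(q)=h^{[41]}(-q^{-1})$, $h^{[221]}(q)=h^{[32]}(-q^{-1})$, $h^{[5]}=q^w$ and $h^{[11111]}=(-q^{-1})^w$. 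As in the three-strand case the writhe of a five-strand knot is even, so $h^{[11111]}=q^{-w}$.

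Collecting powers of $\lambda$ in $\sum_Q h^Q Z(\hat S_Q)$ produces a bracket whose $\lambda^0$ term is the constant $1$ (from $[5]$), whose $\lambda^1$ and $\lambda^3$ terms carry the factor $(q^2+q^{-2})(q+q^{-1})$ multiplying $h^{[41]}$ and $h^{[2111]}$ respectively, whose $\lambda^2$ term carries $(q^2+q^{-2})(q^2+1+q^{-2})$ multiplying $h^{[311]}$, and whose $\lambda^4$ term is $q^{-6w}$. Imposing $h^{[32]}=0$ forces $h^{[221]}=0$ and removes every $Z(\hat S_{32})$ and $Z(\hat S_{221})$ contribution, so only these five survive. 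Substituting the hypotheses then collapses the $h^{[41]}$ factor to $-e_1$ and the $h^{[311]}$ factor to $q^{-w}e_2$, where $e_k$ denotes the $k$-th elementary symmetric polynomial in the monomials $q^{\gamma_0},\dots,q^{\gamma_3}$. On the other side, expanding the target as $\prod_{i=0}^3(1-\lambda q^{\alpha_i})=\sum_{k=0}^4(-1)^k e_k(q^{\alpha})\lambda^k$ and using $\alpha_i=\gamma_i-2w$, hence $e_k(q^{\alpha})=q^{-2wk}e_k$, the $\lambda^0,\lambda^1,\lambda^2$ coefficients agree immediately, and the $\lambda^4$ coefficient agrees because $e_4=q^{\sum_i\gamma_i}=q^{2w}$ gives $q^{-8w}e_4=q^{-6w}$.

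The delicate step, and the one I expect to be the main obstacle, is the $\lambda^3$ coefficient attached to $[2111]$. Since $(q+q^{-1})$ is anti-invariant while $(q^2+q^{-2})$ is invariant under $q\mapsto-q^{-1}$, and since every $\gamma_i$ is odd so that $q^{\gamma_i}\mapsto-q^{-\gamma_i}$, applying this substitution to $h^{[41]}(q^2+q^{-2})(q+q^{-1})=-e_1$ yields $h^{[2111]}(q^2+q^{-2})(q+q^{-1})=-\sum_i q^{-\gamma_i}$. The $\lambda^3$ coefficient of the bracket is therefore $q^{-4w}h^{[2111]}(q^2+q^{-2})(q+q^{-1})=-q^{-4w}\sum_i q^{-\gamma_i}$, which must equal the target value $-q^{-6w}e_3$. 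This reduces to the identity $e_3=\sum_l q^{2w-\gamma_l}=q^{2w}\sum_l q^{-\gamma_l}$, which is exactly where the constraint $\sum_i\gamma_i=2w$ enters; tracking the signs correctly through this step is the crux of the argument.

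Finally I would dispose of the stated special case by substituting $\gamma_2=\gamma_1+2$ and $\gamma_3=\gamma_1+4$ into $h^{[311]}=q^{-w}e_2/[(q^2+q^{-2})(q^2+1+q^{-2})]$; a short computation shows $q^{-w}e_2=(q^{\delta}+q^{-\delta})(q^2+1+q^{-2})$ with $\delta=\tfrac12(\gamma_0+\gamma_2-\gamma_1-\gamma_3)$, so that the factor $q^2+1+q^{-2}$ cancels and $h^{[311]}$ reduces to $(q^{\delta}+q^{-\delta})/(q^2+q^{-2})$, as claimed. This last simplification is routine but sign- and exponent-sensitive, so I would carry it out explicitly rather than leave it implicit.
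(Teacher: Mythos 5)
Your proposal is correct and follows essentially the same route as the paper's proof: substitute the $Z(\hat S_Q)$ of (\ref{5-strandHZ}) into the character expansion, use the mirror symmetry $q\to -q^{-1}$ to obtain $h^{[2111]},h^{[221]},h^{[11111]}$ from $h^{[41]},h^{[32]},h^{[5]}$, invoke the evenness of $w$, and match coefficients of $\lambda^k$ against the elementary symmetric polynomials in $q^{\alpha_i}$. The paper merely asserts the result is "easily obtained" from this substitution; your write-up supplies the same computation in full detail, including the $\lambda^3$ sign-tracking and the verification of the special case of $h^{[311]}$, all of which check out.
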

\begin{proof}
   Using  the above Racah coefficients together with $h^{[5]}=q^w$ and $h^{[221]}$, $h^{[2111]}$ and $h^{[11111]}$, which  are obtained  from $h^{[32]},h^{[41]}$ and $h^{[5]}$, respectively, by $q\to -q^{-1}$ and substituting these in $Z=\sum_Qh^QZ(\hat{S}_Q)$,  where $Z(\hat{S}_Q)$ are given in (\ref{5-strandHZ}), and using the fact that $w$ is even for \emph{knots} with 5 strands,  the result in (\ref{5fact}) is easily obtained.
\end{proof}
\vskip2mm
\noindent{Remark.}
By Prop.~\ref{prop:3fact}, ~\ref{prop:4fact} and ~\ref{prop:5condition} we observe that HZ-factorisability occurs when only hook-shaped Young diagrams  have a non-vanishing contribution.

\begin{prop}\label{prop:F5E5}
    For full twists $F_5=(\sigma_4\sigma_3\sigma_2\sigma_1)^5$ we compute  $(R_4^{Q}R_3^{Q}R_2^{Q}
    R_1^{Q})^{5}$ for
    \be
    Q=[41]: q^{10}I_4,\;\;Q=[32]: q^{4}I_5,\;\; Q=[311]: I_6,
    \ee
    where, as before, $I_n$ is the $n\times n$ identity matrix. 
    For a partial full twist $F_4=(\sigma_3\sigma_2\sigma_1)^4$ the matrix $(R_3^QR_2^{Q}R_1^{Q})^{3}$ can be written in block matrix form for each $Q$ as
       \be
[41]:\begin{pmatrix} q ^{12}& \\
& q ^{4}I_3 \\
\end{pmatrix},\;[32]:\begin{pmatrix} q ^{4}I_3& \\
&I_2 \\
\end{pmatrix},\;[311]:\begin{pmatrix} q ^{4}I_3& \\
& q ^{-4}I_3 \\
\end{pmatrix}.
    \ee
    For Jucys-Murphy twists $\tilde{E}_5=\sigma_4\sigma_3\sigma_2\sigma_1^2\sigma_2\sigma_3\sigma_4$, $(R_3^{Q}R_2^{Q}{R_1^{Q}}^2R_2^{Q}R_3^{Q})$ gives
    \be
[41]:\begin{pmatrix} q ^{-2}& \\
& q ^{6}I_3 \\
\end{pmatrix},\;[32]:\begin{pmatrix} I_3& \\
&q ^{4}I_2 \\
\end{pmatrix},\;[311]:\begin{pmatrix} q ^{-4}I_3& \\
& q ^{4}I_3 \\
\end{pmatrix}
    \ee 
    For the combination $F_4^{ j}\otimes F_5^l\otimes \tilde{E}_5^k$ the $R$-matrix representations become
\ba
&&[41]:\begin{pmatrix} q ^{12j+10l-2k}& \\
& q ^{4j+10l+6k}I_3 \\
\end{pmatrix},\;[32]:\begin{pmatrix}q^{4j+4l} I_3& \\
&q ^{4k+4l}I_2 \\
\end{pmatrix},\nonumber\\
&&\hspace{23mm}[311]:\begin{pmatrix} q ^{4j -4k}I_3& \\
& q ^{-4j+4k}I_3 \\
\end{pmatrix}
    \ea
\end{prop}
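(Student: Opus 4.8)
The plan is to verify each identity by direct matrix multiplication of the explicit Racah data in (\ref{R5-strand}), (\ref{Racah311})--(\ref{R132}) together with the relations (\ref{RUVW}), exactly as was done for Props.~\ref{prop:F3E3} and \ref{prop:F4E4}, but organised around three representation-theoretic facts that predict the answers and sharply reduce the bookkeeping. First, the full twist $F_5=(\sigma_4\sigma_3\sigma_2\sigma_1)^5$ generates the centre of $B_5$, so by Schur's lemma its image $\left(R_4^{Q}R_3^{Q}R_2^{Q}R_1^{Q}\right)^{5}$ is forced to be a scalar on each irreducible $V_Q$. That scalar is $q^{2c(Q)}$, where $c(Q)=\sum_{(i,j)\in Q}(i-j)$ is the total box content already displayed through the factors $\{Aq^{i-j}\}$ in the Schur functions (\ref{S_5}); since $c([41])=5$, $c([32])=2$ and $c([311])=0$, this immediately yields $q^{10}I_4$, $q^{4}I_5$ and $I_6$. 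In practice I would read these off by forming the fifth power from the explicit matrices and checking that every off-diagonal entry vanishes, leaving the single predicted diagonal value.

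Second, the partial full twist $F_4=(\sigma_3\sigma_2\sigma_1)^{4}$ is the full twist on the first four strands, hence central in $B_4=\langle\sigma_1,\sigma_2,\sigma_3\rangle$, so its image is the fourth power $\left(R_3^{Q}R_2^{Q}R_1^{Q}\right)^{4}$. The Jucys--Murphy twist $\tilde E_5=\sigma_4\sigma_3\sigma_2\sigma_1^{2}\sigma_2\sigma_3\sigma_4=(\sigma_4\sigma_3\sigma_2\sigma_1)(\sigma_1\sigma_2\sigma_3\sigma_4)$ is the fifth Jucys--Murphy element $L_5$, with image $R_4^{Q}R_3^{Q}R_2^{Q}(R_1^{Q})^{2}R_2^{Q}R_3^{Q}R_4^{Q}$, and it commutes with all of $B_4$. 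Both operators therefore act as scalars on each summand of the restriction of $V_Q$ to $B_4$, which decomposes (multiplicity-free) as $\bigoplus_{Q'}V_{Q'}$ with $Q'$ ranging over the shapes obtained by deleting one box from $Q$; this is precisely the decomposition for which the Racah matrices $U^{Q},V^{Q},W^{Q}$ of (\ref{RUVW}) supply the adapted basis, so the block structure emerges automatically once the products are formed. On the $V_{Q'}$ summand $F_4$ contributes $q^{2c(Q')}$ and $L_5$ contributes $q^{2(i_0-j_0)}$, where $(i_0,j_0)$ is the box added in passing from $Q'$ to $Q$. For $Q=[41]$ the summands are $[4]$ (with $c=6$, added-box content $-1$) and $[31]$ (with $c=2$, added-box content $3$), reproducing $\mathrm{diag}(q^{12},q^{4}I_3)$ for $F_4$ and $\mathrm{diag}(q^{-2},q^{6}I_3)$ for $\tilde E_5$; the cases $[32]$ and $[311]$ are identical bookkeeping using $c([22])=0$, $c([211])=-2$ and the contents of their added boxes.

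Finally, for the combination $F_4^{j}\otimes F_5^{l}\otimes\tilde E_5^{k}$ the three images are simultaneously block-diagonal in the $B_4$-adapted basis --- $F_5$ is central and commutes with everything, while $F_4$ and $\tilde E_5$ both commute with $B_4$ --- so the representation is the block-by-block product of the three scalars, namely $q^{2jc(Q')}\,q^{2lc(Q)}\,q^{2k(i_0-j_0)}$ on the $V_{Q'}$ block; substituting the content data above reproduces every exponent in the stated block matrices. The only genuine obstacle is computational rather than conceptual: one must confirm that the explicit products of the $4\times4$, $5\times5$ and $6\times6$ matrices really do collapse to block-scalar form, and in particular that the irrational $s_n$-type off-diagonal entries cancel once the relevant powers are taken. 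The representation-theoretic picture guarantees this collapse a priori and pins down each eigenvalue, so it suffices to evaluate each operator on one representative vector per block --- for instance the vector spanning the one-dimensional $[4]$-summand of $[41]$ --- rather than to expand the full matrix products, after which the remaining verifications are entirely routine.
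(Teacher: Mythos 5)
Your proposal is correct, but it proves the result by a genuinely different route than the paper: the paper's proof is literally ``by direct computation'' from the explicit matrices (\ref{R5-strand}), (\ref{Racah311}) and (\ref{R1311})--(\ref{R132}), whereas you derive every entry from representation theory --- centrality of the full twist giving the scalar $q^{2c(Q)}$ with $c(Q)$ the total box content, the multiplicity-free branching $V_Q\downarrow_{B_4}=\bigoplus_{Q'}V_{Q'}$ explaining the block structure, and the Jucys--Murphy eigenvalue $q^{2(i_0-j_0)}$ for the added box. Your content bookkeeping checks out in every case ($c([41])=5$, $c([32])=2$, $c([311])=0$, $c([4])=6$, $c([31])=2$, $c([22])=0$, $c([211])=-2$, and added-box contents $-1,3$; $0,2$; $-2,2$ respectively), and it is consistent with the lower-strand data of Props.~\ref{prop:F3E3} and \ref{prop:F4E4}, so your predictions reproduce all the stated exponents. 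What your approach buys is an a priori explanation of \emph{why} the products collapse to block-scalar form and where each eigenvalue comes from, reducing the brute-force verification to one vector per block; what the paper's approach buys is independence from any representation-theoretic input beyond the explicit $R$-matrices. Note also that you have (correctly, if silently) repaired two apparent typos in the statement: the partial full twist $F_4=(\sigma_3\sigma_2\sigma_1)^4$ must be represented by the \emph{fourth} power $(R_3^QR_2^QR_1^Q)^4$, not the displayed third power, and the Jucys--Murphy word $\tilde E_5$ must include the outer factors $R_4^Q$ on both ends; in each case only the corrected product matches the displayed diagonal matrices, as your eigenvalue formulas confirm. It would strengthen your write-up to flag these discrepancies explicitly rather than passing over them.
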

\begin{proof}
    By direct computation using and (\ref{R5-strand}), (\ref{Racah311}), and (\ref{R1311})-(\ref{R132}).
\end{proof}
By Prop.~\ref{prop:F5E5}, concatenating 
 $F_4^{ j}\otimes F_5^l\otimes \tilde{E}_5^k$ to a base braid $\mathbf{b}$ with  $h^Q(\mathbf{b})=\sum_ix_{ii}^Q$, has the following effects on the Racah coefficients
 \ba\label{F5F4E5effect}
 &&h^{[5]}=q^w\to q^{12j+20l+8k}h^{[5]}\nonumber\\
 &&h^{[41]} 
 \to q^{12j-2k+10l}x_{11}^{[41]}+q^{4j+10l+6k}\sum_{i=2}^4x_{ii}^{[41]}\nonumber\\
 &&h^{[32]}\to q^{4(j+l)}\sum_{i=1}^3x_{ii}^{[32]}+q^{4(k+l)}\sum_{i=4}^5x_{ii}^{[32]}\nonumber\\
  &&h^{[311]}\to q^{4(j-k)}\sum_{i=1}^3x_{ii}^{[311]}+q^{-4(j-k)}\sum_{i=4}^6x_{ii}^{[311]}.
 \ea
 For instance the effect of Jucys-Murphy twist $\tilde{E}_5^k$ alone on $h^{[311]}$ is
\be
h^{[311]}\to (-1)^k\left(h^{[311]}+\sum_{i=1}^k(-1)^i(q^{4i}+q^{-4i})\right).
\ee
\noindent{$\bullet$ Base braid $\mathbf{b}=\sigma_4\sigma_3\sigma_2\sigma_1$. } For this 5-strand braid whose closure is the unknot we compute   $x^{[41]}_{11}=\frac{-q^{-1}}{[4]_q}$, $x^{[41]}_{22}=\frac{-q^{3}}{[4]_q[3]_q}$, $x^{[41]}_{33}=\frac{-q^{3}}{[2]_q[3]_q}$, $x^{[41]}_{44}=\frac{-q^{3}}{[2]_q}$, and hence for $\mathcal{K}^{(5)}_{j,k,l}:=\mathbf{b}\otimes F_4^j\otimes F_5^l\otimes \tilde{E}_5^k$, with $w=4(1 + 3 j + 5 l + 2 k)$, we find
\be
h^{[41]}=-\frac{q^{12j-2k+10l-1}+q^{4j+6k+10l+1}+q^{4j+6k+10l+3}+q^{4j+6k+10l+5}}{(q^2+q^{-2})(q+q^{-1})}.
\ee
Similarly
  $x^{[32]}_{11}=\frac{-1}{[3]_q}$, $x^{[32]}_{22}=\frac{1}{[3]_q[2]_q^2}$, $x^{[32]}_{33}=\frac{1}{[2]_q^2}$, $x^{[32]}_{44}=\frac{-q^{2}}{[2]_q^2}$, $x^{[32]}_{55}=\frac{q^{2}}{[2]_q^2}$, 
  and
$x^{[311]}_{11}=\frac{q^{-2}}{[3]_q}$, $x^{[311]}_{22}=\frac{q^{-2}}{[2]_q^2[3]_q([3]_q-1)}$, $x^{[311]}_{33}=\frac{q^{-2}}{[2]_q^2([3]_q-1)}$, $x^{[311]}_{44}=\frac{q^{2}}{[2]_q^2([3]_q-1)}$, $x^{[311]}_{55}=\frac{q^{2}}{[2]_q^2[3]_q([3]_q-1)}$, $x^{[311]}_{66}=\frac{q^{2}}{[3]_q}$
with which we compute for $\mathcal{K}^{(5)}_{j,k,l}$
  \be
 h^{[32]}=0,\;\; h^{[311]}=\frac{q^{4j-4k-2}+q^{4k-4j+2}}{(q^2+q^{-2})}.
  \ee 
These Racah coefficients  satisfy the HZ-factorisability conditions in Prop.~\ref{prop:5condition} and hence the HZ transform is of the form (\ref{5fact}) with $\al_0=-3(3+4j+6k+10l)$ and $\al_i=-10(2j+k+3l)-9+2i$ for $i=1,2,3$.

\vskip2mm
\noindent{Example  $T(5,n)$.}
For  5-strand torus knots $T(5,n)$, for $ n\ne 5 \hskip 1mm (mod \hskip 1mm 5)$
\be\label{5full}
h^{[41]}=\tr\left(R_1^{[41]}R_2^{[41]}R_3^{[41]}R_4^{[41]}\right)^n=- q^{2n} \ee 
and the HZ character expansion is (c.f. \cite{MorozovU})
\be\label{HT(5,n)}
H(T(5,n))=A^{-4n}(q^{4n}S_5^{*} - q^{2n}S_{41}^{*}+S_{311}^{*}-q^{-2n}S_{2111}^{*} +q^{-4n}S_{11111}^{*}).
\ee
When $n=5l+1$ this correspond to $\mathcal{K}^{(5)}_{s,s,l-s}$, while $n=5l-1$ to $\mathcal{K}^{(5)}_{l-s,l-s-1,s}$. For $n=5l\pm2$ a different base is required. In particular, $(\sigma_4\sigma_3\sigma_2\sigma_1)^2\otimes F_4^s\otimes F_5^{l-s}\otimes \tilde{E}_5^s$ corresponds to torus knots $T(5,5l+2)$. For general $j,k,l$ $(\sigma_4\sigma_3\sigma_2\sigma_1)^2\otimes F_4^j\otimes F_5^l\otimes \tilde{E}_5^k$ has Racah coefficients 
\ba
h^{[41]}&=&-\frac{q^{12j-2k+10l+3}+q^{4j+6k+10l+1}+q^{4j+6k+10l+5}+q^{4j+6k+10l+7}}{(q^2+q^{-2})(q+q^{-1})},\nonumber\\
&& h^{[32]}=\frac{q^{1+4(j+l)}-q^{1+4(k+l)}+q^{3+4(j+l)}-q^{3+4(k+l)}}{q+q^{-1}},\nonumber\\
 &&\hspace{17mm}h^{[311]}=\frac{q^{4(j-k)+1}+q^{4(k-j)-1}}{(q+q^{-1})},
  \ea
  which satisfy the HZ factorisability conditions only when $j=k$. Similarly, $(\sigma_4\sigma_3\sigma_2\sigma_1)^{-2}\otimes F_4^s\otimes F_5^{l-s}\otimes \tilde{E}_5^s$ gives rise to the torus knots $T(5,5l-2)$.
When $n=5l$ these correspond to full twists $F_5^l=T(5,5l)=(\sigma_4\sigma_3\sigma_2\sigma_1)^{5l}$, which have the HOMFLY--PT expansion
\ba
H(F_5^l)&=&A^{-20l}\big(q^{20l} S^*_{[5]}+4q^{10l}S^*_{[41]}+5q^{4l}S^*_{[32]} + 6 S^*_{[311]}\nonumber\\
&&+ 5q^{-4l}S^*_{[221]} +4q^{-10l}S^*_{[2111]}+ q^{-20l} S^*_{[11111]}\big).
\ea 
Similarly the Jucys--Murphy braid $E_5^k$ has HOMFLY--PT character expansion 
\ba
H(E^k_5)\hspace{-2mm}&=&\hspace{-2mm}A^{-8k}(q^{8k}S^*_{[5]}+ q^{-8k} S^*_{[11111]}+(q^{-2k}+3q^{6k})S^*_{[41]}+(3+2q^{4k})S^*_{[32]}\nonumber\\
&&\hspace{-6mm}+3(q^{-4k}+q^{4k})S^*_{[311]}+ (3+2q^{-4k})S^*_{[221]} +(q^{2k}+3q^{-6k})S^*_{[2111]}).
\ea

\vskip2mm
\noindent{Remark.} It is important to comment on the computational efficiency of the HOMFLY--PT polynomial via the character expansion  as opposed to the skein relation. This is because, while the skein relation demands a combinatorial computation involving $2^c$ steps ($c$ is the number of crossings) that should be carried all at once; the character expansion involves the multiplication of $c$  matrices, which can be simplified in a straightforward way by splitting the product into sub-products with less than  $c$ matrices and then multiplying them together. Hence, the HOMFLY--PT polynomial of knots with high number of crossings can be computed very efficiently via characters. However, the computational complexity  increases with the number of strands, as the $R-$matrices involved become larger. 
\paragraph{6 strands.}
The Schur functions  $S_Q^{*}$ are 
\ba
&& S_6^{*}= \frac{\{A q\}\{A q^2\}\{A q^3\}\{A q^4\}\{A q^5\}}{\{q^2\}\{q^3\}\{q^4\}\{q^5\}\{q^6\}},\;\;S_{51}^{*}= \frac{\{A /q\}\{A q\}\{A q^2\}\{A q^3\}\{A q^4\}}{\{q\}\{q^2\}\{q^3\}\{q^4\}\{q^6\}},\nonumber\\
&&\hspace{25mm} S_{411}^{*} = \frac{ \{A/q^2\}\{A/q\}\{A q\}\{A q^2\}\{A q^3\}}{\{q\}\{q^2\}^2\{q^3\}\{q^6\}},\nonumber\\
&& S_{42}^{*} =\frac{\{A/q\}\{A\} \{A q\}\{A q^2\}\{A q^3\}}{\{q\}\{q^2\}^2\{q^4\}\{q^5\}},\;\;{S}_{33}^{*}=\frac{ \{A\}\{A  q\}^2\{A /q\}\{A q^2\}}{\{q^2\}^2\{q^3\}^2\{q^4\}},\nonumber\\
&& \hspace{23mm} S_{321}^{*}= \frac{A^w \{A\}^2\{A q\}\{A q^2\}\{A q^{-1}\}\{A q^{-2}\}}{\{q\}^3\{q^3\}^2\{q^5\}}
\ea
and the HZ transform  of $\hat S_Q=A^{-w}\frac{\{A\}}{\{q\}}S^*_Q$
 is
\ba\label{HZ6-strand}
&&Z(\hat S_{6}) = \frac{\lambda q^{-w}}{\mathcal{D}_6},
\hskip 3mm
Z(\hat S_{51}) = \frac{\lambda^2 q^{-2w}(q^4+q^2+1+q^{-2}+q^{-4})}{\mathcal{D}_6},\nonumber\ea
\ba
&&Z(\hat S_{42})=\frac{(q^2+1+q^{-2})(\lambda^2 q^{-2w}+ \lambda^3 q^{-3w}(q^2+q^{-2}))}{\mathcal{D}_6},\nonumber
\\
&&Z(\hat S_{411})=\frac{\lambda^3 q^{-3w}(q^6 + q^4 + 2 q^2 +2+2 q^{-2}+q^{-4}+q^{-6})}{\mathcal{D}_6},\nonumber\\
&&Z(\hat S_{33})= \frac{\lambda^2 q^{-2w}+ q^{-3w}\lambda^3 (q^2+1+q^{-2}) + q^{-4w} \lambda^4}{\mathcal{D}_6},\nonumber\\
&&Z(\hat S_{321})=\frac{q^{-3w} \lambda^3 (q^4+2 q^2+2+2 q^{-2}+q^{-4})(1+ q^{-w} \lambda)}{\mathcal{D}_6},
\ea
where $\mathcal{D}_6= (1- q^{-w-6}\lambda)(1-q^{-w-4}\lambda)(1-q^{-w-2} \lambda)(1-q^{-w}\lambda)(1-q^{-w+2}\lambda)(1-q^{-w+4}\lambda)(1-q^{-w+6} \lambda)$ and $Z(\hat S_{111111}), Z(\hat S_{21111}),Z(\hat S_{3111}),Z(\hat S_{2211}),Z(\hat S_{222})$ are obtained as the mirror  of (\ref{HZ6-strand}), with increasing powers of $\la$.
\begin{prop}
    The HZ function for $m=6$ is factorised when
    \ba
  &&  h^{[51]}=-\frac{\sum_{i=0}^4q^{\gamma_i}}{q^4+q^2+1+q^{-2}+q^{-4}};\;\;\sum_{i=0}^{4}\gamma_i=3w\nonumber\\
    &&h^{[411]}=\frac{\sum_{(i,j)}q^{\gamma_i+\gamma_j-w}}{q^6 + q^4 + 2 q^2 +2+2 q^{-2}+q^{-4}+q^{-6}},\nonumber\\
&&h^{[42]}=h^{[33]}=h^{[321]}=0,
    \ea
    where $(i,j)$ refers to the 10 possible permutations for $i,j\in\{0,..,4\}$ with $i\neq j$.
\end{prop}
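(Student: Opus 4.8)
The plan is to follow the same route as the proofs of Prop.~\ref{prop:3fact}, Prop.~\ref{prop:4fact} and Prop.~\ref{prop:5condition}: insert the stated conditions into the HZ character expansion $Z=\sum_Q h^Q Z(\hat S_Q)$, now summed over all eleven partitions of $6$, and show that the result collapses to the factorised form
\begin{equation}
Z=\frac{\lambda}{\mathcal{D}_6}\prod_{i=0}^{4}(1-\lambda q^{\alpha_i}),\qquad \alpha_i:=\gamma_i-2w .
\end{equation}
First I would eliminate every non-hook diagram. The conditions impose $h^{[42]}=h^{[33]}=h^{[321]}=0$ directly, and since the conjugate pairs are $[42]^{T}=[2211]$, $[33]^{T}=[222]$ while $[321]$ is self-conjugate, the rule that $h^{Q^{T}}$ is obtained from $h^{Q}$ by $q\to -q^{-1}$ forces $h^{[2211]}=h^{[222]}=0$ as well. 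Hence only the six hook diagrams $[6],[51],[411],[3111],[21111],[111111]$ survive, which already realises the earlier Remark that HZ-factorisability is tied to non-vanishing contributions from hooks alone.

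Next I would read off the coefficient of each power of $\lambda$ from the surviving hook terms and match it against the elementary symmetric polynomials $e_k(q^{\alpha_0},\dots,q^{\alpha_4})$ produced by expanding the target product. The three \emph{upper} hooks are immediate: $[6]$ gives $h^{[6]}q^{-w}=1=e_0$; multiplying the condition for $h^{[51]}$ by the factor $q^4+q^2+1+q^{-2}+q^{-4}$ appearing in $Z(\hat S_{51})$ yields $-\sum_i q^{\gamma_i-2w}=-e_1$; and multiplying the condition for $h^{[411]}$ by the polynomial in $Z(\hat S_{411})$ gives $\sum_{i<j}q^{\gamma_i+\gamma_j-4w}=e_2$, where the ten unordered pairs $(i,j)$ reproduce $e_2$ with each term appearing once. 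These account for the $\lambda^1,\lambda^2,\lambda^3$ coefficients $e_0,-e_1,e_2$.

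The heart of the argument is the three \emph{lower} hooks $[3111],[21111],[111111]$, whose Racah coefficients are the $q\to -q^{-1}$ images of $h^{[411]},h^{[51]},h^{[6]}$ and whose HZ characters are the mirrors of $Z(\hat S_{411}),Z(\hat S_{51}),Z(\hat S_6)$, obtained by sending $q^{-rw}\lambda^{r}\mapsto q^{-(7-r)w}\lambda^{7-r}$ while leaving the symmetric $q$-polynomial untouched. Here I would use that the exponents $\gamma_i$ are odd (as in the lower-strand cases) and that, for a six-strand knot, the underlying permutation is a six-cycle, so $w$ is \emph{odd}, in contrast with the three- and five-strand cases; consequently $q\to -q^{-1}$ sends each odd-exponent monomial $q^{m}\mapsto -q^{-m}$. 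Combined with the reciprocity identity $e_{5-k}(q^{\alpha})=q^{\sum_i\alpha_i}\,e_k(q^{-\alpha})$ together with $\sum_i\alpha_i=3w-10w=-7w$, this reproduces exactly $-e_3,\,+e_4,\,-e_5$ as the coefficients of $\lambda^4,\lambda^5,\lambda^6$; in particular the top term is $h^{[111111]}q^{-6w}=(-q^{-1})^{w}q^{-6w}=-q^{-7w}=-e_5$, where the minus sign hinges essentially on the oddness of $w$. I expect the sign bookkeeping across these mirror terms—keeping the parities of $\gamma_i$ and $w$ aligned so that the conjugation signs match the reciprocity of the $e_k$—to be the only delicate point. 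Once it is verified, assembling the six matched coefficients $e_0,-e_1,e_2,-e_3,e_4,-e_5$ gives the displayed product and completes the proof.
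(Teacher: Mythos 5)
Your proof is correct and follows exactly the route the paper uses for the analogous propositions at $m=3,4,5$ (substituting the stated conditions into the character expansion $Z=\sum_Q h^Q Z(\hat S_Q)$ and matching each coefficient of $\lambda^{k+1}$ with $(-1)^k e_k(q^{\alpha_0},\dots,q^{\alpha_4})$); the paper in fact states the $m=6$ proposition without any proof, so your argument fills that gap in the paper's own style. The one delicate point is handled correctly: the statement tacitly requires the $\gamma_i$ to be odd (as in the $m=4,5$ versions), and the oddness of $w$ for a $6$-strand knot (a $6$-cycle being an odd permutation) then makes the $q\to -q^{-1}$ conjugation of the lower hooks, combined with $e_{5-k}(q^{\alpha})=q^{-7w}e_k(q^{-\alpha})$, reproduce $-e_3$, $+e_4$, $-e_5$ as required.
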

\paragraph{Higher strands.}
The Schur functions  $S_Q$ for Young diagrams with $m$ boxes are defined in (\ref{S_Q^{hat}}). For the first two Young diagrams the corresponding HZ transform is
\be
Z(\hat S_{m}) = \frac{\lambda q^{-w}}{\mathcal{D}_m},\;\;Z(\hat S_{[(m-1)1]})= \frac{\lambda^2q^{-2w}}{\mathcal{D}_m} \sum_{i=0}^{m-2} q^{m-2-2i},\ee where $\mathcal{D}_m= \prod_{i=0}^m(1- q^{-w-m+2i}\lambda)$.

For full twists $F_m^l=T(m,ml)$ and Jucys-Murphy twists $E_m^k$, $h^{[(m-1)1]}$ can be obtained as generalisation from the cases with $m\leq 5$ as $h^{[(m-1)1]}(F_m^l)=(m-1)q^{(m-3)ml}$ and  $h^{[(m-1)1]}(E_m^k)=(m-2)q^{2(m-2)k}+q^{-2k}$.  The effect of concatenation with full twists $F_m^l$ is
\be
h^{[m]}\to q^{m(m-1)l}h^{[m]},\;\;h^{[(m-1)1]}\to q^{m(m-3)l}h^{[(m-1)1]},
\ee
while they leave the symmetric Young diagrams (such as $[21]$, $[22]$, $[311]$, etc.) unchanged.

 \begin{cor}\label{cor:h}
    For HZ-factorisable knots with an $m-$strand braid representative the HOMFLY--PT polynomial is fully determined by just the writhe $w$ of the braid diagram and the Racah coefficient
    \be
    h^{[(m-1)1]}=-\frac{\sum_{i=0}^{m-2}q^{\g_i}}{[m-1]_q};\;\gamma_i\in\mathbb{Z}\;\;{\rm s.t.}\;\sum_{i=0}^{m-2}\g_i=(m-3)w.
    \ee
\end{cor}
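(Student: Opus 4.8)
\emph{Proof plan.} The plan is to exploit the observation recorded after Prop.~\ref{prop:5condition}, that HZ factorisability forces all non-hook Racah coefficients to vanish, so that the character expansion (\ref{prop2.2}) collapses to a sum over the $m$ hook diagrams $[(m-k)1^k]$, $k=0,\dots,m-1$. Generalising Props.~\ref{prop:3fact}, \ref{prop:4fact} and \ref{prop:5condition}, a factorisable $m$-strand knot then has
\[
Z=\frac{\lambda}{\mathcal{D}_m}\prod_{i=0}^{m-2}(1-\lambda q^{\alpha_i}),\qquad \alpha_i=\gamma_i-2w,
\]
with $\mathcal{D}_m=\prod_{i=0}^{m}(1-q^{-w-m+2i}\lambda)$ depending only on $w$. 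Since $\bar H=A^{-w}\sum_Q h^Q S_Q$ and only hooks survive, it suffices to show that every hook coefficient $h^{[(m-k)1^k]}$ is determined by $w$ together with the single coefficient $h^{[(m-1)1]}$.

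First I would establish the key structural fact that, for a hook with leg length $k$, the HZ transform of its character is a pure power of $\lambda$,
\[
Z(\hat S_{[(m-k)1^k]})=\frac{q^{-(k+1)w}\,c_k\,\lambda^{k+1}}{\mathcal{D}_m},
\]
where $c_k$ is a $q$-polynomial depending only on $m$ and $k$ (with $c_0=1$, $c_1=[m-1]_q$ and $c_{m-1}=1$). This follows from the hook product formula (\ref{S_Q^{hat}}) under the geometric-series substitution (\ref{HZ2}), and it reproduces every explicit case computed for $m\le 6$. Because the $m$ hooks then contribute at the $m$ distinct orders $\lambda^1,\dots,\lambda^m$, clearing $\mathcal{D}_m$ turns the factorisable identity into the polynomial equality
\[
\lambda\prod_{i=0}^{m-2}(1-\lambda q^{\alpha_i})=\sum_{k=0}^{m-1}h^{[(m-k)1^k]}q^{-(k+1)w}c_k\,\lambda^{k+1},
\]
and comparing the coefficient of $\lambda^{k+1}$ yields
\[
h^{[(m-k)1^k]}=\frac{(-1)^k q^{(1-k)w}}{c_k}\,e_k\!\left(q^{\gamma_0},\dots,q^{\gamma_{m-2}}\right),
\]
with $e_k$ the $k$-th elementary symmetric polynomial; the case $k=1$ reproduces exactly the form of $h^{[(m-1)1]}$ asserted in the statement.

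It then remains to observe that $h^{[(m-1)1]}$ alone fixes the whole multiset $\{\gamma_i\}$: the $k=1$ relation reads $-[m-1]_q\,h^{[(m-1)1]}=\sum_{i=0}^{m-2}q^{\gamma_i}$, and reading off the coefficient of each monomial $q^n$ in this Laurent polynomial recovers how many $\gamma_i$ equal $n$. Hence every $e_k(q^{\gamma_i})$, and with it every hook coefficient above, is determined by $w$ and $h^{[(m-1)1]}$; the two extremal coefficients $h^{[m]}=q^w$ and $h^{[1^m]}=(-q^{-1})^w$ arise as the $k=0$ and $k=m-1$ specialisations, the latter encoding the sum rule $\sum_i\gamma_i=(m-3)w$ once the knot parity $w\equiv m-1 \pmod 2$ is used. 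Since all surviving Racah coefficients are thereby pinned down, the character expansion determines $\bar H$ completely.

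The main obstacle is the general-$m$ input rather than the coefficient bookkeeping: one must prove, uniformly in $m$, both that factorisability annihilates the non-hook coefficients and that $Z(\hat S_{\mathrm{hook}})$ is the single $\lambda$-monomial displayed above with the correct normalisation $c_k$. These were verified diagram-by-diagram for $m\le 6$ from the explicit $R$-matrices, but a clean argument needs the closed hook form of $S_Q$ fed through (\ref{HZ2}); once this is in hand, the elementary-symmetric-function matching and the recovery of $\{\gamma_i\}$ from $h^{[(m-1)1]}$ are routine.
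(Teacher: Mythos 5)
Your route is genuinely different from the paper's, and considerably longer. The paper's proof is essentially one line: the HZ transform is invertible (one can recover $\bar H(q^N,q)$ as the coefficient of $\lambda^N$, hence $\bar H$ itself), so $\bar H$ is determined by $Z$; and a factorised $Z$ is determined by its numerator exponents $\alpha_i=\gamma_i-2w$ and denominator exponents $\beta_i=-w-m+2i$, both of which are read off from $w$ and $h^{[(m-1)1]}$. This bypasses the Racah coefficients entirely. You instead reconstruct every hook coefficient $h^{[(m-k)1^k]}$ as an elementary symmetric function of the $q^{\gamma_i}$ by matching powers of $\lambda$, and then feed these back into the character expansion. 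Your bookkeeping is correct (including the parity check $(-1)^w=(-1)^{m-1}$ for knots at $k=m-1$), and it yields strictly more information than the corollary asks for — explicit formulas for all surviving $h^Q$ — which is a genuine bonus. The missing idea relative to the paper is simply the invertibility of the transform, which makes all of that reconstruction unnecessary.

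Two caveats, one of which is a real gap. First, your opening premise that ``HZ factorisability forces all non-hook Racah coefficients to vanish'' is not true as stated, and the paper itself supplies counterexamples: the $4$-strand presentation of $T(3,5)=10_{124}$ has $h^{[22]}=q^{-1}-q\neq 0$ yet is HZ-factorisable, as is $L10n_{42}\{1\}$ with $h^{[22]}\neq 0$. The hook-only condition is \emph{sufficient}, not necessary, so your argument only covers knots satisfying the hypotheses of Props.~\ref{prop:3fact}, \ref{prop:4fact}, \ref{prop:5condition} and their generalisations — which is admittedly the regime in which the stated form of $h^{[(m-1)1]}$ holds, but you should say so rather than derive it from factorisability. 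Second, as you acknowledge, the uniform-in-$m$ facts you need — that $Z(\hat S_{[(m-k)1^k]})$ is a single $\lambda$-monomial of order $k+1$ with normalisation $c_k$ — are only verified in the paper for $m\le 6$; a clean general proof from the hook content set $\{-k,\dots,m-k-1\}$ fed through the substitution (\ref{HZ2}) would be needed to close your argument, whereas the paper's inverse-transform argument requires no such input.
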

\begin{proof}
    The HOMFLY--PT polynomial can be obtained by the inverse HZ transform \cite{Petrou1}, and in HZ factorisable cases it is fully determined by the numerator and denominator exponents $\al_i=\g_i-2w$ for $i\in\{0,..,m-2\}$ and $\beta_i=-w-m+2i$ for $i\in\{0,..,m\}$. \end{proof}

\vskip1mm
\noindent{Remark.}   It is noteworthy that the writhe of an $m$-strand braid diagram, can be thought of as an invariant of its closure when $m$ is fixed. This is because the writhe can only be changed by the first Reidemeister move, which can not be introduced to a braid diagram without changing the number of its strands. 
\vskip2mm
We define the general $m-$strand family of knots
\be\mathcal{K}^{(m)}_{j,k,l}=\sigma_{m-1}\sigma_{m-2}...\sigma_1\otimes F_{m-1}^j\otimes \tilde{E}_m^k\otimes F_m^l,\ee
which is HZ-factorisable and serves as a hyperbolic extension of torus knots. It has 
Racah coefficient
\ba\label{h[m-11]}
h^{[(m-1)1]}=-\frac{1}{[m-1]_q}\hspace{-3mm}&&\hspace{-2mm}\bigg(q^{j(m-1)(m-2)+lm(m-3)-2k-1}\nonumber\\
&&+\sum_{i=1}^{m-2}q^{j(m-1)(m-4)+lm(m-3)+2k(m-2)-1+2i}\bigg),
\ea
which together with the writhe $w=(m-1)(1+lm+2k+(m-2)j)$ they fully determine the numerator and denominator exponents of the HZ transform. In particular $\beta_i=-(m-1)(1+lm+2k+(m-2)j)-m+2i$ for $i\in\{0,..,m\}$, $\al_0=-j(m-1)(m-2)-lm(m+1)-2k(2m-1)-2m+1$ and $\al_i=-jm(m-1)-lm(m+1)-2k(m-2)-2(m-i)+1$ for $i\in\{1,..,m-2\}$. 

\vskip 2mm
\noindent{Example  $T(m,m+1)\otimes E_m^k=\mathcal{K}^{(m)}_{0,k,1}$.} Using (\ref{h[m-11]}) and $w=(m^2-1)+2m(m-1)k$   the numerator exponents 
in the factorised HZ become  $\al_0=-2(2m-1)k- m(m+3)+1$ and $\al_i=-2mk-m(m+1)-1 - 2 i$ for $i\in\{1,..,m-2\}$, which verifies the result (up to $q\to q^{-1}$) computed in 
\cite{Petrou1}. 
The coefficient $h^{[(m-1)1]}$ can be expressed in polynomial form as
\be h^{[(m-1)1]}=-q^{\delta}+\sum_{i=1}^k(q^{\delta+2i+2(i-1)(m-2)}-q^{\delta+2i(m-1)}),\ee
where $\delta=m(m-2)-2k-3$. 
For instance,  for $T(6,7)\otimes E_6$ ($w=45$), it reads
 $h^{[51]}=-q^{19}+q^{21}-q^{29}$.

\vskip 2mm
\noindent{Example  $T(m,n)$:} For torus knots and links $T(m,n)$ with $n\neq mk$, with braid $(\sigma_{m-1}...\sigma_1)^n$, $w=(m-1)n$, including $\mathcal{K}^{(m)}_{s,s,l-s}=T(m,ml+1)$ and $\mathcal{K}^{(m)}_{l-s,l-s-1,s}=T(m,ml-1)$, we find $h^{[(m-1)1]}=-q^{(m-3)n}$. This yields $\al_i=-n(m+1)-m+2+2i$ for $i\in\{0,..,m-2\}$ and $\beta_i=-n(m-1)-m+2i$ for $i\in\{0,..,m\}$, in agreement with \cite{Petrou1}.
Only single hook type Young diagrams contributions occur for torus \emph{knots} but not necessarily for links, for which the HZ function is not factorisable.

For the HZ-factorisable family $\mathcal{K}^{(m)}_{j,k,l}$ the  Jones and Alexander  polynomials, which are special cases of the HOMFLY--PT, can be expressed explicitly in terms of the $\al_i$ and $\beta_i$, given above via inverse HZ transform \cite{Petrou1}. For instance, their Jones polynomial, obtained at $N=2$, becomes
\ba
J(\mathcal{K}^{(m)}_{j,k,l})&=&-q^{-j(m-1)(m-2)-lm(m+1)-2k(2m-1)-2m+1}\nonumber\\
&&-\sum_{i=1}^{m-2}q^{-jm(m-1)-lm(m+1)-2k(m-2)-2(m-i)+1}\nonumber\\
&&+\sum_{i=0}^{m}q^{-(m-1)(1+lm+2k+(m-2)j)-m+2i}.\ea

\subsection*{Jones and Alexander polynomials via characters}
It may be interesting to provide general formulas for the Jones ($N=2$) and Alexander ($N=0$) polynomials in terms of characters.
Since the Racah coefficients $h^Q$,  
are polynomial only in $q$  (i.e., there is no $A=q^N$ dependence), they appear only in the HZ numerator (c.f. Cor.~\ref{cor:h}) and, hence, they may be related to  the Alexander and Jones polynomials. This was already pointed out  for the 3-strand case in (\ref{h21a}). 
\begin{prop}
\textbf{The character expansion for the Jones polynomial} of a knot or link with an $m$-strand braid representative is 
\be\label{Jexpansion}
J(\mathcal{K};q^2)=q^{-2w}\sum_{i=0}^{\lceil (m-1)/2\rceil}h^{[(m-i)i]}\frac{\{q^{m+1-2i}\}}{\{q^2\}}.
\ee
\end{prop}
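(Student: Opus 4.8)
The plan is to obtain the Jones polynomial as the $N=2$ specialisation of the normalised HOMFLY--PT character expansion $H(\mathcal{K})=A^{-w}\sum_Q h^Q S_Q^*$, i.e.\ to set $A=q^2$. Since the Racah coefficients $h^Q$ depend only on $q$ and $w$ is fixed by the braid, all of the $A$-dependence sits in the characters: the prefactor becomes $A^{-w}=q^{-2w}$, and the problem reduces to evaluating each $S_Q^*$ at $A=q^2$. Thus the identity to be proven is equivalent to two claims: (i) only Young diagrams with at most two rows survive this specialisation, and (ii) for the two-row diagram $Q=[(m-i)i]$ one has $S_Q^*\big|_{A=q^2}=\{q^{m+1-2i}\}/\{q^2\}$.

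For (i), I would read off the vanishing directly from (\ref{S_Q^{hat}}). The box in column $1$ of row $r$ has content $1-r$, contributing the numerator factor $\{A\,q^{1-r}\}$, which at $A=q^2$ equals $\{q^{3-r}\}$; this is $\{q^0\}=0$ exactly when $r=3$. Any partition of $m$ with three or more rows contains the box $(1,3)$, so its character vanishes at $A=q^2$. Hence only the one- and two-row partitions $Q=[(m-i)i]$ with $0\le i\le\lfloor m/2\rfloor=\lceil (m-1)/2\rceil$ contribute, which already pins down the summation range in (\ref{Jexpansion}).

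The heart of the argument is (ii). The direct route is to specialise the hook-content product (\ref{S_Q^{hat}}) for $Q=[(m-i)i]$: at $A=q^2$ the numerator splits into the row-one contribution $\{q^2\}\{q^3\}\cdots\{q^{m-i+1}\}$ and the row-two contribution $\{q\}\{q^2\}\cdots\{q^i\}$ (empty when $i=0$), while the denominator is the product of the hook lengths, which for a two-row shape are elementary to list. A telescoping cancellation then collapses everything to $S_Q\big|_{A=q^2}=\{q^{m+1-2i}\}/\{q\}$, whence $S_Q^*=\tfrac{\{q\}}{\{q^2\}}S_Q$ gives $\{q^{m+1-2i}\}/\{q^2\}$. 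A cleaner, essentially calculation-free alternative is to recognise $S_Q\big|_{A=q^N}$ as the $U_q(N)$ quantum dimension, the $q$-analogue of the Weyl dimension (\ref{S_Qclassical}); at $N=2$ the shape $[(m-i)i]$ is the spin-$\tfrac{m-2i}{2}$ irreducible representation, whose quantum dimension is the quantum integer $[m-2i+1]_q=\{q^{m-2i+1}\}/\{q\}$, reproducing the same value.

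Assembling (i) and (ii) into $H(\mathcal{K};q^2,q)=q^{-2w}\sum_Q h^Q S_Q^*\big|_{A=q^2}$ yields (\ref{Jexpansion}). I expect the only genuine obstacle to be the hook-length bookkeeping in the direct evaluation of (ii); I would sidestep it using the quantum-dimension interpretation, which makes both the value $\{q^{m+1-2i}\}/\{q^2\}$ and the vanishing for three or more rows transparent. As sanity checks one can verify the formula on the unknot ($m=1$, giving $J=1$) and on the trefoil ($\sigma_1^3$, so $m=2$, $w=3$), where it reproduces the standard $J(3_1)$ in the variable $q^2$.
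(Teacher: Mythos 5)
Your proposal is correct and follows essentially the same route as the paper: substitute $A=q^2$ into the character expansion, observe that any diagram containing the box $(1,3)$ (i.e.\ any partition with three or more rows) has a vanishing numerator factor $\{Aq^{-2}\}|_{A=q^2}=\{1\}=0$, and evaluate the surviving two-row characters to $\{q^{m+1-2i}\}/\{q^2\}$. The paper does the two-row evaluation by the direct telescoping of the hook-content product (your first route); your quantum-dimension shortcut via the spin-$\tfrac{m-2i}{2}$ representation of $U_q(2)$ is a clean equivalent but not a substantively different argument.
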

\begin{proof}
    The Jones character expansion is obtain by substituting $A=q^2$ in (\ref{Schur}), i.e. $J(q^2)=q^{-2w}\sum_Q h^Q S^*_Q\vert_{A=q^2}$. The Schur functions  (\ref{S_Q^{*}}) become $S_Q^*\vert_{A=q^2}=\frac{\{q\}}{\{q^2\}}\prod_{(i,j)\in Q}\frac{\{q^{i-j+2}\}}{\{q^{h_{i,j}}\}}$ and hence are vanishing for a Young diagram $Q$ that includes a box in the $i^{\rm th}$ column and $j^{\rm th}$ row  that satisfy $i-j+2=0$ (e.g. the box $(1,3)$). Such a box is not included precisely in the  $ \lceil (m+1)/2\rceil$ Young diagrams of the form $Q=[(m-i)i]$, which only consist of two rows.   The Schur function for $Q=[m]=:[m0]$,  evaluates as $S^*_{m}\vert_{A=q^2}=\frac{\{q\}}{\{q^{2}\}} \frac{\{q^{m+1}\}\{q^m\}\{q^{m-1}\}\cdots\{q^3\}\{q^2\}}{\{q^{m}\}\{q^{m-1}\}\{q^{m-2}\}\cdots\{q^2\}\{q\}}=\frac{\{q^{m+1}\}}{\{q^{2}\}}$, 
    while more generally
$S_Q^{*}\vert_{A= q^2}=\frac{\{q^l\}}{\{q^2\}}$, with $l=m+1,m-1,m-3,...,\epsilon$, where $\epsilon$ is $2$ or $1$ when $m$ is odd or even, respectively. All the remaining $S_Q^*$ are vanishing, hence yielding (\ref{Jexpansion}).
\end{proof}
For example, in the  case $m=5$, the  Schur functions in (\ref{S_5}), evaluated at $A=q^2$, become 
$S_{311}^*=S_{221}^*=S_{2111}^*=S_{11111}^*=0$ and $S_5^*= \{q^6\}/\{q^2\}= q^4+ q^{-4}+1$, $S_{41}^*=\{q^4\}/\{q^2\}= q^2+q^{-2}$,  $S_{32}^*=1$. Hence, the Jones polynomial is expressed in this case as
\ba\label{h32Jones}
J(q^2) = 
q^{-2w}\sum_Q h^{Q}S_Q^{*}=q^{-2w}(h^{[5]}(q^4+q^{-4}+1) + h^{[41]} (q^2+q^{-2}) + h^{[32]}).
\ea
This implies that, given $h^{[5]}=q^w$ and $h^{[41]}$, then $h^{[32]}$ can be determined via the Jones polynomial. 
For instance, for $8_3$ with $w=0$, using the  values $h^{[5]}=1$ and $h^{[41]}=q^6-q^4-q^2+1-q^{-2}-q^{-4}+q^{-6}$  computed above, along with  the Jones polynomial \cite{Bar-Natan} $J(8_3,q^2) = q^8-q^6 + 2 q^4- 3 q^2 + 3 - 3 q^{-2}+ 2 q^{-4}-q^{-6}+q^{-8}$, (\ref{h32Jones}) gives $h^{[32]}=(q^{-2}-1+q^2)(q-q^{-1})^2$, as expected.

However, the computation of the Jones polynomial is still computationally complicated for knots with high number of crossings, as it requires exponential time. In contrast, the Alexander polynomial can be obtained via the Seifert determinant, which consists of the linking numbers of the  homological  cycles  \cite{Murasugi} and gives  a fast algorithm to evaluate it. Hence it is interesting to relate $h^Q$ with the Alexander polynomial. 

\vskip 2mm
\begin{prop}\label{propAlex}
\textbf{ The character expansion for the Alexander polynomial} of a knot\footnote{Note that the standard, single variable Alexander polynomial does not apply to a link with multiple components.} with an $m$-strand braid representative is
\be\label{Theorem}
\Delta = \frac{1}{\sum_{l=0}^{m-1} q^{m-1- 2l}}
\sum_{{\rm single \hskip 1mm hook}\; Q} (- 1)^{r(Q)+1} h^{Q},
\ee
where $r(Q)$ is the number of rows in the hook-shaped Young diagram  $Q$.

\end{prop}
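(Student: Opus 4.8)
The plan is to obtain the Alexander polynomial as the $N=0$, i.e. $A=q^{0}=1$, specialisation of the normalised HOMFLY--PT polynomial, so that $\Delta=H(\mathcal{K})\vert_{A=1}=A^{-w}\sum_{Q}h^{Q}S_{Q}^{*}\vert_{A=1}$, and then to determine which characters $S_{Q}^{*}$ survive this limit. At $A=1$ the writhe prefactor $A^{-w}$ becomes $1$, so no $w$-dependence remains, which is reassuring since $\Delta$ is an invariant of the closure. The only delicate point is that $\{A\}\to 0$, so $S_{Q}^{*}=\frac{\{q\}}{\{A\}}S_{Q}$ is naively of the form $0/0$. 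I would resolve this by isolating from $S_{Q}$ the factor coming from the corner box $(1,1)$, for which $i-j=0$ and hence $\{Aq^{i-j}\}=\{A\}$. This factor cancels the $\{A\}$ in the denominator, leaving the regular expression $S_{Q}^{*}=\frac{\{q\}}{\{q^{m}\}}\prod_{(i,j)\neq(1,1)}\frac{\{Aq^{i-j}\}}{\{q^{h_{i,j}}\}}$, where I have used $h_{1,1}=m$.

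The next step is to evaluate the remaining product at $A=1$. Each surviving factor becomes $\{q^{i-j}\}$, which vanishes precisely when the box $(i,j)$ lies again on the diagonal. The first such box after $(1,1)$ is $(2,2)$, which belongs to $Q$ if and only if its second row has length at least two, i.e. if and only if $Q$ is not a hook. Hence $S_{Q}^{*}\vert_{A=1}=0$ for every non-hook $Q$, which already explains why only single-hook diagrams contribute to the sum in (\ref{Theorem}).

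It remains to compute $S_{Q}^{*}\vert_{A=1}$ for a hook $Q=[m-r+1,1^{r-1}]$ with $r=r(Q)$ rows; this is the step I expect to carry the main computational content. For such a hook the contents $i-j$ of the non-corner boxes run over $1,\dots,m-r$ along the first row and $-1,\dots,-(r-1)$ down the first column, while the corresponding hook lengths run over $1,\dots,m-r$ and $1,\dots,r-1$ respectively. Writing $\{q^{-k}\}=-\{q^{k}\}$ to extract the signs from the column contents produces an overall factor $(-1)^{r-1}$, after which the numerator products $\{q^{1}\}\cdots\{q^{m-r}\}$ and $\{q^{1}\}\cdots\{q^{r-1}\}$ cancel against the identical hook-length products in the denominator. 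I therefore expect the clean result $S_{Q}^{*}\vert_{A=1}=(-1)^{r-1}\frac{\{q\}}{\{q^{m}\}}$, with all dependence on the arm and leg lengths telescoping away.

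Finally I would assemble $\Delta=\sum_{\text{hook }Q}(-1)^{r(Q)-1}h^{Q}\frac{\{q\}}{\{q^{m}\}}$ and recognise $\frac{\{q^{m}\}}{\{q\}}=q^{m-1}+q^{m-3}+\cdots+q^{-(m-1)}=\sum_{l=0}^{m-1}q^{m-1-2l}$, so that $\frac{\{q\}}{\{q^{m}\}}$ is exactly the reciprocal appearing in the statement. Since $(-1)^{r-1}=(-1)^{r+1}$, this yields (\ref{Theorem}). The only genuine obstacle is the bookkeeping in the hook evaluation, together with the mild care needed to make the $0/0$ cancellation rigorous rather than merely formal; everything else follows the same template as the Jones computation, with the diagonal box $(2,2)$ playing the role that the box $(1,3)$ played there.
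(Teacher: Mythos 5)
Your proof is correct and follows essentially the same route as the paper: substitute $A=1$, cancel the $\{A\}$ from the corner box against the normalisation, kill non-hooks via the diagonal box $(2,2)$, and evaluate hooks to $(-1)^{r-1}\{q\}/\{q^m\}$. The one improvement is that you carry out the hook evaluation in full generality via the telescoping of contents against hook lengths, whereas the paper only verifies the sign pattern case by case for $m=3,4,5$.
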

\begin{proof}
 The Alexander character expansion is obtain by substituting $A=1$ in (\ref{Schur}), i.e. $\Delta(q^2)=\sum_Q h^Q S^*_Q\vert_{A=1}$. 
 The Schur functions $S_Q^*=\frac{\{q\}}{\{A\}}\prod_{(i,j)\in Q}\frac{\{Aq^{i-j}\}}{\{q^{h_{i,j}}\}}$ in (\ref{S_Q^{*}}) are non-vanishing in the limit $A\to 1$ only for single hook diagrams $Q$, since they satisfy $i\neq j$ for any $(i,j)\in Q$, except for $(1,1)$ which gives a factor $\{A\}$ that cancels with the one in the normalisation. 
 When $Q$ is of hook shape, the Schur functions become $S_Q^{*}\big|_{A=1}=\pm S_m^{*}\big|_{A=1}= \pm \frac{\{q\}}{\{q^m\}}$, where the sign is positive for a Young diagram with  only 1 row, and alternates with each additional row. This can be confirmed directly, in the 3-strand case by (\ref{S}),  in the 4-strand case  by (\ref{S_4}), 
 while for 5-strands (\ref{S_5}) yields $S_5^*\big|_{A=1}= \{q\}/\{q^5\}= 1/(q^4+q^2+1+q^{-2}+q^{-4}) = -S_{41}^*\big|_{A=1}=S_{311}^*\big|_{A=1}= -S_{2111}^*\big|_{A=1}= S_{11111}^*\big|_{A=1}$. 
 For non-hook diagrams there exist boxes indexed by $(i,i)$ other than $(1,1)$, resulting in an extra factor $\{A\}$ which vanishes at $A=1$. Hence such diagrams don't appear in the Jones character expansion (\ref{Theorem}). 
\end{proof}

Similar to the above considerations for the Jones polynomial, it may be concluded that the Alexander polynomial $\Delta(q^2)$, via (\ref{Theorem}), may be used  as an alternative way to determine one of the Racah coefficients $h^Q$, when $Q$ is of hook shape. In particular, by Proposition~\ref{propAlex}, the Alexander polynomial in the $3$-strand cases expressed as
\be\label{78}
\Delta = \frac{q-q^{-1}}{q^3-q^{-3}}( h^{[3]}  - h^{[21]}+ h^{[111]})= \frac{h^{[3]}-h^{[21]}+ h^{[111]}}{1+ q^2+ q^{-2}}.
\ee
For the figure-8 knot, for instance, with writhe is $w=0$ and hence $h^{[3]}=h^{[111]}=1$,  by the Alexander polynomial $\Delta(4_1) =3-q^2-q^{-2}$ \cite{Bar-Natan}, $h^{[21]}$ is obtained
as $h^{[21]}= q^4- 2 q^2+1 - 2 q^{-2}+ q^{-4}$. 
\vskip1mm

\noindent{}{Example $T(m,n)$.} The expansion in (\ref{Theorem}) is consistent with the factorised formula for Alexander polynomial for torus
knots \cite{Murasugi} $\Delta_{m,n}(t)=t^{-(m-1)(n-1)/2} (1-t)(1-t^{mn})/(1-t^m)(1-t^n)$. For instance, in the case of $T(5,3)$, $h^{[5]}= q^{12}, h^{[41]}= - q^{6}, h^{[311]}= 1, h^{[2111]}= - q^{-6}, h^{[11111]}=q^{-12}$ as obtained by 
(\ref{HT(5,n)}). Hence, (\ref{Theorem}) yields
$\Delta(T(5,3)) = (q^{-12}+ q^{-6}+1 + q^6 + q^{12})/(q^4+q^2+1+q^{-2}+q^{-4})= q^8- q^6+ q^2+q^{-2}- q^{-6} + q^{-8} -1$, which agrees with the above formula $\Delta_{m,n}(t)$ at  $m=5,n=3$ and $q^2=t$.

\section{Factorised form decomposition\label{sec:factform}}

In the previous section, solidifying the results of \cite{Petrou1}, we have extensively described knots and links that admit a factorised HZ function in both the numerator and denominator. However, such knots and links are very special, as for the vast majority of knots, the numerator of the HZ transform is not factorisable. Nevertheless,  the HZ denominator  universally admits the factorised form
 $\prod_{i=0}^m(1-\lambda q^{\beta_i})$, as can also be seen by the denominator $\mathcal{D}_m$ of $Z(\hat S_Q)$ considered in Sec.~\ref{sec:CharExp}. 
 This suggests that  we may group knots into different "HZ types",  which share the same HZ denominator exponents ($\beta_0,...,\beta_m$). 
 In this section we show that for more general knots and links the HZ function may still admit a relatively simple form (at least for small braid index), according to the following conjecture.

 \begin{conj}\label{conj:FactForm}
    The Harer--Zagier transform $Z(\mathcal{K};\lambda,q)$ of the HOMFLY--PT polynomial for any knot $\mathcal{K}$ of HZ type $(\beta_0,...\beta_m)$, can be expressed as the sum of factorised terms 
       \be\label{HZstructure}
    [\alpha_0,...,\alpha_{m-2}]:=\frac{\lambda\prod_{i_l=0}^{m-2}(1-\lambda q^{\alpha_{i}})}{\prod_{i=0}^{m}(1-\lambda q^{\beta_i})}\;\;{\rm s.t.}\;\;\sum_{i=0}^{m-2}\alpha_{i}=\sum_{i=0}^{m}\beta_i
    \ee 
    as 
 \be\label{GeneralHZ}
 \boxed{Z(\mathcal{K};\lambda,q) = \sum_i c_i [\alpha_0,\alpha_1,...,\alpha_{m-2}]_i,}
 \ee
 in which the coefficients satisfy $\sum_i c_i=1$. 
  The HZ transform is fully factorised if the only non-vanishing coefficient is $c_1=1$.
  \end{conj}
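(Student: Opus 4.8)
The plan is to work entirely at the level of the HZ numerator. Since every $Z(\hat S_Q)$ carries the universal denominator $\mathcal{D}_m=\prod_{i=0}^m(1-q^{\beta_i}\la)$, multiplying the character expansion $Z(\mathcal{K})=\sum_Q h^Q Z(\hat S_Q)$ by $\mathcal{D}_m$ produces a Laurent polynomial $N(\la)=\sum_{j=1}^m N_j\,\la^j$ whose coefficients $N_j\in\mathbb{Z}[q^{\pm1}]$ are explicit $h^Q$-weighted combinations read off from the tables of $Z(\hat S_Q)$. A single bracket expands as $\la\prod_{i=0}^{m-2}(1-q^{\al_i}\la)=\sum_{j=1}^m(-1)^{j-1}e_{j-1}(q^{\al_0},\dots,q^{\al_{m-2}})\,\la^j$, so its $\la^j$-coefficient is, up to sign, the elementary symmetric polynomial $e_{j-1}$ in the variables $q^{\al_i}$. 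The goal is thus to realise $(N_1,\dots,N_m)$ as $\sum_i c_i\big((-1)^{j-1}e_{j-1}(q^{\al^{(i)}})\big)_j$. First I would dispose of the two extreme coefficients: each bracket has $\la^1$-coefficient $e_0=1$, while $h^{[m]}=q^w$ and $Z(\hat S_m)=q^{-w}\la/\mathcal{D}_m$ force $N_1=1$; hence $\sum_i c_i=1$ is automatic, which is precisely the normalisation asserted in the conjecture. The $\la^m$-coefficient gives the compatible constraint $\prod_i q^{\al_i}=q^{\sum_i\beta_i}$, i.e. the built-in condition $\sum_i\al_i=\sum_i\beta_i$, consistent with $N_m=(-1)^w q^{-(m+1)w}$ once $w$ is even for knots.

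For three strands this reduces to a one-dimensional problem I can solve outright. The only non-extreme coefficient is $N_2=(q+q^{-1})q^{-2w}h^{[21]}$, and the bracket $[-2w+\delta,-2w-\delta]$ contributes $-q^{-2w}(q^{\delta}+q^{-\delta})$ to it. The key inputs are two symmetries of $h^{[21]}$, both read off from (\ref{eq:Racah3}): since $R_j^{[21]}(-q)=-R_j^{[21]}(q)$ one gets $h^{[21]}(-q)=(-1)^w h^{[21]}(q)$, and since $R_j^{[21]}(q^{-1})=-P\,R_j^{[21]}(q)\,P^{-1}$ for the fixed matrix $P=\left(\begin{smallmatrix}0&1\\-1&0\end{smallmatrix}\right)$ one gets $h^{[21]}(q^{-1})=(-1)^w h^{[21]}(q)$ after telescoping the $P$'s through the braid word. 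For a knot $w$ is even, so $h^{[21]}$ has only even powers of $q$ and is invariant under $q\to q^{-1}$; therefore $(q+q^{-1})h^{[21]}=\sum_{n>0\ \mathrm{odd}}a_n(q^n+q^{-n})$. Setting $c_n=-a_n$ and using the brackets $[-2w+n,-2w-n]$ matches $N_2$ termwise, and the $\la^1,\la^3$ coefficients match automatically. Finally $\sum_i c_i=-\sum_n a_n=-h^{[21]}(1)$; at $q=1$ the $R$-matrices represent $S_3$ with $[21]$ the standard two-dimensional irrep, whose character on the three-cycle realised by any knot-closing $3$-braid equals $-1$, so $\sum_i c_i=1$. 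This completes the three-strand case and reproduces, e.g., $Z(4_1)=-[-5,5]+[-3,3]+[-1,1]$.

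For general $m$ the same reduction applies, and the Young-diagram symmetry $h^{Q}(-q^{-1})=h^{Q^{\mathsf T}}(q)$, together with the observation (Prop.~\ref{prop:3fact}--\ref{prop:5condition}) that only hook diagrams survive in the fully factorised case, organises the $N_j$ into symmetric blocks and pins down the admissible exponents. The plan is then to match the free coefficients $N_2,\dots,N_{m-1}$ by choosing a family of bracket tuples $\{\al^{(i)}\}$ and solving the resulting linear system for the $c_i$, peeling off the highest-degree discrepancy with a single bracket and recursing. The hard part will be exactly this simultaneous match: unlike in the three-strand case, the data $(-1)^{j-1}e_{j-1}(q^{\al^{(i)}})$ for different $j$ are not independent but are all elementary symmetric functions of one common tuple $q^{\al^{(i)}}$, so hitting the prescribed $(N_2,\dots,N_{m-1})$ requires the chosen tuples to be mutually compatible across all orders $e_1,\dots,e_{m-2}$ at once. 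I expect the greedy/triangular scheme above always to terminate with coefficients summing to one---this is what underlies the explicit algorithm up to eight strands---but proving termination uniformly over all knots, rather than verifying it case by case, is the genuine obstruction and the reason the statement is stated only as a conjecture beyond small braid index.
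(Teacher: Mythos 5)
Your proposal is correct and follows essentially the same route as the paper: both reduce to matching the $\mathcal{O}(\lambda^2)$ numerator coefficient $q^{-2w}(q+q^{-1})h^{[21]}$, use the evenness and palindromic symmetry of $h^{[21]}$ to pair terms into brackets $[-2w+n,-2w-n]$, and obtain $\sum_i c_i=-h^{[21]}(1)=1$ at $q=1$, while leaving $m\ge 4$ conjectural exactly as the paper does. Your justification of $h^{[21]}(1)=-1$ via the $S_3$ character of the standard irrep on the $3$-cycle underlying any knot-closing braid is in fact tighter than the paper's stated reason (which appeals only to the braid containing both generators), and your derivation of the $q\to q^{-1}$ and $q\to -q$ symmetries from explicit $R$-matrix conjugations supplies a detail the paper asserts from the symmetry of the Young diagram.
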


For  fixed $m$, for which the writhe of the diagram becomes invariant of the knot, we can write
  \be
  \sum_{i=0}^{m-2}\alpha_{i}=\sum_{i=0}^{m}\beta_i=-(m+1)w.
  \ee
For the case $m=2$ the proof of the conjecture is trivial, since the HZ is always factorised, as we saw in (\ref{HZ2strand}).   It is also easy to prove it in the 3-strand case using the character expansion and the symmetries of the Racah coefficients.
 \begin{proof} ($m=3$)
 Using the HZ character expansion in the $m=3$ case, 
 the factorised form decomposition is determined by the Racah coefficient $h^{[21]}$. Since the latter is always an alternating symmetric polynomial (since the Young diagram is symmetric), it can be expressed as $h^{[21]}(q)=\sum_{i=0}^\zeta (-1)^i\eta_{-\zeta+2i}q^{-\zeta+2i},$ $\zeta\in\mathbb{Z}$, where the coefficients satisfy $\eta_{-\zeta+2i}=\eta_{\zeta-2i}$ and  $\eta_{\zeta}=\pm1$. 
 Hence the $\mathcal{O}(\la^2)$ term in the numerator of HZ, which is  $q^{-2w}(q+q^{-1})h^{[21]}$ yields the  factorised form decomposition
 \ba\label{FFDh21}
Z(\la,q)\hspace{-2mm}&=&\hspace{-2mm} -\eta_\zeta[-2w-\zeta-1,-2w+\zeta+1]\\\nonumber
&&\hspace{-2mm}- \sum_{i=0}^{\zeta/2-1}
 (-1)^i( \eta_{\zeta-2i}- \eta_{\zeta-2-2i})[-2w-\zeta+1+2i,-2w+\zeta-1-2i],
 \ea
 where the denominator of the bracket $[\cdot,\cdot]$ is $\mathcal{D}_3$, as given in (\ref{D3w}).
 Indeed the paired exponents in each factorised form sum to $-4w=\sum_{i=0}^3\beta_i$ and the coefficients are $c_1=-\eta_\zeta$ and $c_i=(-1)^{i+1}(\eta_{\zeta-2i+4}- \eta_{\zeta-2i+2})$ for $2\leq i\leq \frac{\zeta}{2}+1$, which sum to
 $\sum_{i}c_i=2\sum_{i=0}^{\zeta/2-1}(-1)^{i+1}\eta_{\zeta-2i}+(-1)^{\zeta/2+1}\eta_0=-h^{[21]}|_{q=1}$. At $q=1$ the R-matrices become
 \be
 R_1=\begin{pmatrix}
    1&\\&-1 
 \end{pmatrix}=R_1^{-1}, \;\; R_2=\begin{pmatrix}
    -\frac{1}{2}&\frac{\sqrt{3}}{2}\\\frac{\sqrt{3}}{2}&\frac{1}{2} 
 \end{pmatrix}=R_2^{-1},\ee
 which both square to $I_2$ and satisfy $\tr(R_1R_2)=-1$. Hence for any 3-strand knot, for which the writhe is even and its braid diagram must contain $\sigma_1$ and $\sigma_2$ at least once, $h^{[21]}(q=1)=-1$, implying $\sum_{i}c_i=1$. In the HZ-factorisable cases $\eta_\zeta=-1$ and  $|\eta_i|=1\;\;\forall i$ (c.f. Prop.~\ref{prop:3fact}).
  \end{proof} 
 In the $m=4$ case, the proof for the decomposition is not so trivial, since the order $\mathcal{O}(\la^2)$  term in the HZ numerator depends on two Racah coefficients. Explicitly, it is given by \be\label{Ol^2}
 q^{-2w}((q^2+1+q^{-2})h^{[31]}+h^{[22]}),\ee which is neither a symmetric nor an alternating polynomial. Each  factorised term contains a triple of integers $\alpha_0,\alpha_1,\alpha_2$
which sum to  $-5w=\sum_{i=0}^{4}\beta_i$, which should match the   exponents of the polynomial at  order $\mathcal{O}(\la^2)$.

For $m=5$ and $m=6$, the $\mathcal{O}(\la^2)$ term is not enough to determine the factorised form decomposition, since it does not contain all the independent Racah coefficients.  Therefore, the $\mathcal{O}(\la^3)$ term should be also considered. For instance, for $m=5$, by (\ref{5-strandHZ}) the term of order $\mathcal{O}(\la^2)$ is expressed in terms of Racah coefficients as
\be\label{Ol2-5strand}
q^{-2w}(q+q^{-1})((q^2+q^{-2})h^{[41]}+h^{[32]}),
\ee
while the $\mathcal{O}(\la^3)$ term is
\be\label{Ol3-5strand}
q^{-3w}(q^2+1+q^{-2})((q^2+q^{-2})h^{[311]}+h^{[32]}).
\ee 
Indeed the latter contains  the Racah coefficient $h^{[311]}$, which provides  an extra degree of freedom that does not appear at order $\mathcal{O}(\la^2)$.

 Below we consider the factorised form decomposition for several examples  for up to $m=5$,  
 grouped together according to their HZ type. 
 
\vskip 2mm
\noindent{\bf $\bullet$ $(-3,-1,1,3)$ type}
 \vskip 1mm
\noindent{}Knots of this type have HZ denominator 
$\mathcal{D}_3=(1-q^{-3}\lambda)(1-q^{-1}\lambda)(1-q \lambda)(1- q^{3}\lambda)$. 
The HOMFLY--PT polynomial $H$ for such knots  contains  $a$ only in the powers $\pm2$, since the HZ is computed from the unnormalised version $\bar{H}=(a-a^{-1})/(q-q^{-1})H$ (the overall factor changes the powers of $a$ by $\pm1$, resulting in the aforementioned denominator exponents).
The knot $4_1$, for example, has HOMFLY--PT polynomial 
$
H(4_1) = a^2+ a^{-2} - z^2 -1.
$
In its character expansion the coefficient $h^{[21]}$ is given by (\ref{t1}). After  multiplying it with the factor $(q+q^{-1})$ of $Z(\hat S_{21})$ in (\ref{ZC3_1}), the order $\lambda^2$ term becomes
$(q^4- 2 q^2+ 1- 2 q^{-2}+ q^{-4})(q+q^{-1})= q^5- q^3- q- q^{-1}- q^{-3} +q^{-5}$. 
Hence, its HZ can be decomposed into  the sum of factorised terms 
\be\label{ZZ4_1}
Z(4_1) = -[-5,5]+ [-3,3]+[-1,1]
\ee
  where by (\ref{HZstructure})
$$
[-n,n] = \frac{\lambda(1-q^{-n}\lambda)(1-q^n \lambda)}{(1-q^{3}\lambda)(1-q \lambda)(1-q^{-1}\lambda)(1-q^{-3}\lambda)}.
$$
There are 16 different knots with up to 10-crossings with this HZ type. The majority of them that have braid index $3$ and even number of crossings 
and admit a  braid diagram with writhe   $   w=0$. 
Among them for instance, are  the knots $6_3$ and $8_{17}$, which  have  HZ decompositions  
\ba\label{818}
&&Z(6_3)= [-7,7] - [-5,5] + [-3,3]\nonumber\\
&&Z(8_{17}) = - [-9,9] + 2 [-7,7] - 2 [-5,5]+ 2 [-3,3].
\ea
Usign braid diagrams, such as the ones presented in \cite{KnotInfo},  we compute the Racah coefficients  $h^{[21]}(6_{3})= - q^{-6} + 2q^{-4} - 3q^{-2}+3 - 3 q^2 + 2 q^4 - q^6$ and
$h^{[21]}(8_{17})=q^8- 3 q^6+ 5 q^4-7 q^2+7- 7 q^{-2}+ 5 q^{-4}- 3 q^{-6}+ q^{-8}$. 
These give the paired exponents and  the coefficients in the above decompositions by (\ref{FFDh21}). 
The remaining knots of this type are $8_9,8_{18},10_{17},10_{48},10_{79},10_{91},10_{99},10_{104},10_{109},10_{118},10_{123},10_{125}$.

\vskip2mm
\noindent{}{Remark.} These knots give an exhaustive list of amphichiral knots with up to 10-crossings and braid index $3$, with the exception of  $10_{48}$, $10_{91}$, $10_{104}$ and $10_{125}$ which are chiral, but the HOMFLY--PT polynomial fails to distinguish them from their mirror image. Note that all the remaining amphichiral knots with up to 10 crossings have braid index $5$ and HZ type $(5,3,1,-1,-3,-5)$ ($m=5$), which is the next possible option satisfying $\sum_i\beta_i=0$.

However, to this type also belongs the knot  $9_{42}$, which has braid index\footnote{The Morton-Franks-Williams inequality, which states that the ${\rm braid\;\;index}\geq m$ (c.f. \cite{Petrou1}), is not sharp in this case and hence their is a common factor in the HZ numerator which cancels with a term in the denominator.} $4$, for which the HZ is decomposed as
\ba\label{Z(6_3)} 
Z(9_{42}) &=& - [-7,7] + [-3,3] + [-1,1].
\ea
This can be related to that of $Z(4_1)$ in (\ref{ZZ4_1}) by the replacement  $[-1,1] \to [-7,7]$.
Its Racah coefficients are 
$h^{[31]}=-2q^{-3} + 3q^{-1} - 3 q + q^3$, $h^{[22]}=q^{-5} - q^{-3} + 2q^{-1} - 2 q + q^3 - q^5$ and  $w=-1$, which by (\ref{Ol^2}) yields the symmetric polynomial $q^{-7} - q^{-3} - q^{-1} - q - q^3 - q^5 + q^7$, from which (\ref{Z(6_3)}) is determined. 
From the factorised form decomposition  it is easy to derive  recursion formulas for the HZ of knots that share the same type. 
For instance, from the ones presented above, we observe
\be\label{relation1}
Z(6_3)+Z(9_{42})= Z(4_1)+Z(\bigcirc),\ee
where $Z(\bigcirc)= \lambda/(1-q^{-1}\lambda)(1-q\lambda)=[-3,3]$, is the HZ corresponding to the unknot with HOMFLY--PT $\bar H(\bigcirc)=\frac{a-a^{-1}}{z}$.

\vskip2mm
\noindent{Remark.} It is interesting  that there is a sequence of knots\footnote{In terms of Khovanov homology tables, this sequence corresponds to the
insertion of $\mathbb{Z}_2$ or a $\mathbb{Z}_2$-lego piece, which connects the $j$ and $j-2$ entry boxes (see Sec.~5 of \cite{Petrou1} for more details).} $4_1\to 6_3 \to 8_9 \to 10_{17}\to 12_{a1273}$, which share the same HZ type and are related  by attaching the braid configuration $\sigma_1\sigma_2^{-1}$ successively to the braid $(\sigma_2^{-1}\sigma_1)^2$, whose closure is $4_1$. 
This relation can be seen in the braid representatives of these knots as presented in \cite{Bar-Natan}. 

\vskip 2mm
\noindent{\bf $\bullet$ (1,3,5,7) type}
 \vskip 1mm  
\noindent{}The knots $3_1,5_2,8_2,8_{21}, 10_{85},10_{100},10_{126},10_{159}$ (8 knots in total,  up to 10 crossings), which have braid index $3=m$,  belong to  this HZ type. 
The sum of the denominator exponents is $\sum_i\beta_i=16$, and hence the the sum of exponents of the numerator in each factorised form should also be
$16$, according to (\ref{HZstructure}). Some examples are
\ba\label{1357}
\hspace{-5mm}Z(5_2) &=& [13,3]\nonumber\\
\hspace{-5mm}Z(8_2) &=&- [17,-1]+[15,1] +[11,5]  \nonumber\\
\hspace{-5mm}Z(10_{100})&=& [19,-3]-2[17,-1] + 3[15,1]
-2 [13,3]+2[11,5]-[9,7]. 
\ea
For $5_2$, which is HZ-factorisable, there is a single term in the decomposition. For the remaining two knots, the Racah coefficients  are $h^{[21]}(8_2)= q^{-8} - 2q^{-6} + 2q^{-4} - 3q^{-2} +3- 3 q^2 + 2 q^4 - 2 q^6 + q^8$ and $h^{[21]}(10_{100})=- q^{-10} + 3q^{-8} - 6q^{-6} + 8q^{-4} - 10q^{-2} +11 - 10 q^2 + 8 q^4 - 
 6 q^6 + 3 q^8 - q^{10}$, while both have writhe\footnote{We again use the braid notation of \cite{KnotInfo}, however sometimes we use the mirror image of the listed knot, as e.g. for $5_2$ and $8_2$.} $w=-4$, which yiled the above expansions by (\ref{FFDh21}).

\newpage
\vskip 2mm
\noindent{\bf $\bullet$ (-5,-3,-1,1,3) type}
\vskip 1mm

\noindent{}Knots of this type with $m=4$, which have braid index at least $4$, include $6_1$, $7_7$, $8_4$ and $8_{13}$. 
 Using the Racah coefficients in the character expansion in (\ref{CH6_1}), the HZ transform for $6_1$ becomes 
\ba\label{HZ6_1}
Z(6_1) &=& \frac{\lambda}{\mathcal{D}_4}  (1- \lambda (q^{5}- q^{3} - q^{-1} - 2 q^{-3} - q^{-5} + q^{-9}) \nonumber\\
&&+ \lambda^2 (1-q^{4}+ 2 q^{-2}+ q^{-4} + q^{-8} - q^{-10})
 - q^{-5} \lambda^3),
\ea
where 
$\mathcal{D}_4 = (1-q^3 \lambda)(1-q \lambda )(1- q^{-1} \lambda)(1-q^{-3} \lambda)(1-q^{-5} \lambda)$. 
This can be expressed as the sum of factorised forms
\be\label{D6_1}
Z(6_1) = [-5,-3,3] -[-1,5,-9] + [-1,-1,-3],
\ee
with  the  notation  of (\ref{HZstructure}) and $\sum_i\alpha_i=\sum_i\beta_i=-5$. Indeed, the numbers in each factorised form match with the exponents of the polynomial at order $\mathcal{O}(\la^2)$ in the HZ numerator.  

\vskip 2mm
 \noindent{\bf $\bullet$ (-5,-3,-1,1,3,5) type}
 \vskip 1mm
\noindent{}Knots of this HZ type have at least braid index $5$ and have braid diagrams with vanishing writhe $w=0$. These include, for example, the  knots $8_3$ and $8_{12}$.
The decomposition for 
knots with $m=5$ is easily obtained by the following algorithm. First we express the terms of order $\lambda^2$ in HZ  as a sum of factorised forms $\sum_ic_i(\mathcal{O}(\lambda^2))[\alpha_0,\alpha_1,\alpha_2,\alpha_3]_i$ with $\sum_ic_i(\mathcal{O}(\lambda^2))=1$. This automatically yields exactly the terms of order $\lambda^4$.  The next step is  finding  the necessary corrections to account for the terms of order $\lambda^3$. This is because the factorised forms that result in the terms of order $\lambda^2$ (by summing 2 of their exponents), also give terms of order $\lambda^3$ (by summing 3 of their exponents), but which do not match the correct ones appearing in the HZ transform, due to its dependence on $h^{[311]}$ (c.f. (\ref{Ol3-5strand})). Such corrections are obtained by 
adding extra factorised forms in the form of
quadruples, which are such that they do not affect the  terms of order $\lambda^2$ and $\lambda^4$. This is achieved by making sure that the sum of the coefficients of this quadruple vanish and hence such contributions cancel. Explicitly, the quadruples read  $[a,b,c,d]-[a,b,c',d']+[a',b',c',d']-[a',b',c,d]$, which have a cyclic structure.  

As an example, we consider $8_3$ for which the Racah coefficients are 
 \ba
 && 
 h^{[41]} = q^6-q^4 - q^2 + 1 - q^{-2}- q^{-4} + q^{-6}\nonumber\\
 &&
 h^{[32]}= (q^2 - 1+ q^{-2}) (q-q^{-1})^2\nonumber\\
 &&h^{[311]} = - 2 q^6 + 3 q^4 - q^2 + 1 - q^{-2} + 3 q^{-4} - 2 q^{-6}.
 \ea
 Using these and $Z(\hat S_Q)$ in (\ref{5-strandHZ}), $Z(8_3)= \sum h^Q Z(\hat S_Q)$  becomes
 \ba
\hspace{-0.5mm} Z(8_3)\hspace{-2mm}& =&\hspace{-2mm} \frac{\lambda}{\mathcal{D}_5} \bigg(1+ (q^{-9}- 2 q^{-3}-q^{-1}- q - 2 q^3 + q^9)\lambda\nonumber\\
 &&\hspace{-14mm}+ (-1- 2 q^{-{10}}- q^{-8} + 7 q^{-6}- 7 q^{-4}+ 6 q^{-2}+ 8 q^2 - 8 q^4 + 4 q^6 + q^8 - 2 q^{10}) \lambda^2 \nonumber\\
 &&\hspace{-14mm} + (q^{-7}- q^{-3}+ q^{-1}+ q - q^3 + q^7)\lambda^3 + \lambda^4\bigg).
\ea
Following the above algorithm we derive the factorised form decomposition 
\ba\label{Z8_3}
Z(8_3)\hspace{-2mm}&=&\hspace{-2mm}-[9,1,-1,-9]+2[3,1,-1,-3]\nonumber\\&&\hspace{-2mm}+[7,-7,1,-1] -[7,-7,3,-3]
+[5,-5,3,-3]-[5,-5,1,-1],\nonumber
\ea
in which the quadruple corrections appear in  the second line.

\vskip 2mm
 \noindent{\bf $\bullet$ (-1,1,3,5,7,9) type}
 \vskip 1mm
 Some examples of this type, for which the HZ decomposition is btained as described above, are
 \ba\label{9_{12}}
\hspace{-4mm} Z(9_{12}) \hspace{-2mm}&=& \hspace{-2mm}-[15,11,-3,1] + 2[13,9,3,-1]+[17,7,1,-1]-[13,11,1,-1]\\
 &&+ ([ 17,-1,9,-1]-[17,-1,7,1]+[11,5,7,1]-[11,5,9,-1]),\nonumber\\
\hspace{-4mm} Z(9_{15})\hspace{-2mm}& = &\hspace{-2mm}[17,3,3,1] + 2[13,9,3,-1]-[15,11,-3,1]-[11,9,3,1] \nonumber\\
 &&\hspace{-2mm}+([ 17,-3,9,1] -[17,-3,7,3]
 +[13,1,7,3]-[13,1,9,1])\nonumber\\
 &&\hspace{-2mm}+ 2([19,7,3,-5]-[19,7,1,-3]+[15,11,1,-3]-[15,11,3,-5]).\nonumber
 \ea

Using similar algorithms it is possible to write down factorised form decompositions for $m=6,7,8$, but these become lengthy since they have multiple correction terms\footnote{\label{foot:FFDm7}For $m=7,8$  corrections terms in the form of octuples are needed to account for the order $\mathcal{O}(\la^4)$ terms, in addition to the quadruple corrections for the $\mathcal{O}(\la^3)$ terms.} and hence they no longer provide an efficient way to describe the HZ function. We omit explicit examples of such cases here, for simplicity.

\section{Dynkin diagrams, Coxeter links and their HZ functions\label{sec:forestquiv}}
An intriguing connection between the zero distributions of the HZ function and of the characteristic polynomials of ADE singularities was presented in \cite{Petrou}. There is yet another interesting way to relate ADE singularities with knots and links. This is achieved by assigning a certain link, called the Coxeter link, to a Dynkin diagram of star shape (i.e. of ADE type)\footnote{The homology of a link, in  relation to  a Dynkin diagram, is well known \cite{ACampo}. By considering the fundamental cycles on a Riemann surface as the vanishing of a polynomial,  one obtains ADE singularities, with each cycle corresponding to a node in the Dynkin diagram.} \cite{Eriko}. Such a link  corresponding to a Dynkin diagram with three legs with $p,q,r$ nodes on each (overlapping at a single node), is shown in Fig.~\ref{fig:coxeterlink}. In the present article we view this correspondence in the light of the HZ transform. 
\begin{figure}[h!]
    \centering
    \includegraphics[width=0.5\linewidth]{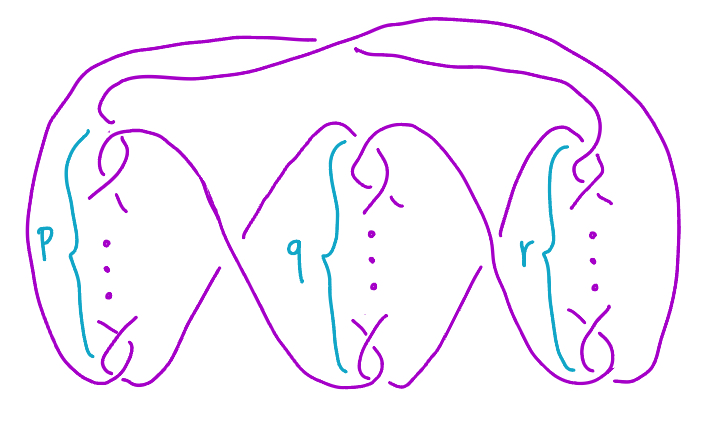}
    \caption{Coxeter link corresponding to a star-shaped Dynkin diagram with $p,q,r$ nodes on each leg.}
    \label{fig:coxeterlink}
\end{figure}

Among the HZ-factorisable cases, of particular interest is the  family of pretzel links $P(3,-2,n-3)$  \cite{Petrou,Petrou1}, which   at even $n=2j$ contains the knots $\mathcal{K}^{(3)}_{j-2,1,0}=\mathcal{K}^{(3)}_{j-3,0,1}$, while for odd $n=2j+1$ they are two component links\footnote{For $n<11$  $P(3,-2,n-3)$ are the twisted hyperbolic links $T(3,\frac{n+1}{2}, 2,1)$, as indicated in Table~\ref{Tab:E_n}, but this is no longer the case for $n\geq11$.}, which can be expressed in terms of partial full twists and Jucys--Murphy twists as   $\sigma_2\otimes F_2^{j-1}\otimes \tilde{E}_3$ or $\sigma_2\otimes F_2^{j-2}\otimes F_3$. 
The knot $12_{242}$, which is a member of this family at $n=10$, has the  interesting property that its
Alexander polynomial has  a real positive root, known as the famous Lehmer number 1.17628 \cite{Eriko}. These pretzel links are precisely the Coxeter links corresponding to $E_n$ type Dynkin diagrams, as indicated in Table~\ref{Tab:E_n} \cite{Eriko}.  Here the  exceptional group $E_n$  is  extended to  $n\geq 9$,  and its Dynkin diagrams  consist of  three parts with $3,2$ and $n-3$ nodes, respectively. 
\begin{table}[hbtp]
\centering
\begin{tabular}{llr}
\toprule
$n$ & \multicolumn{1}{l}{$E_n$ Dynkin diagram}&$P(3,-2,n-3)$\\
\midrule
4&\raisebox{-0.3\height}{\includegraphics[scale=0.7]{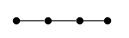}} ($E_4=A_4$)&$5_1=T(2,5)  $ \\
5&\hspace{0.35mm}\raisebox{-0.35\height}{\includegraphics[scale=0.67]{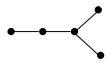}} ($E_5=D_5$) & $L7n1\{0\}^+=T(3,3,2,1)$\\
6&\raisebox{-0.5\height}{\centering\includegraphics[scale=0.7]{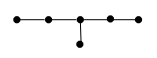}} & $8_{19}=T(3,4)$ \\
7&\hspace{-0.25mm}\raisebox{-0.55\height}{\includegraphics[scale=0.7]{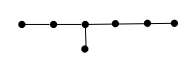}} & $L9n15\{0\}^+=T(3,4,2,1)$\\
8&\hspace{-0.05mm}\raisebox{-0.6\height}{\includegraphics[scale=0.7]{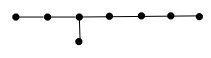}} & $10_{124}=T(3,5)$\\
9&\hspace{0.35mm}\raisebox{-0.5\height}{\includegraphics[scale=0.13]{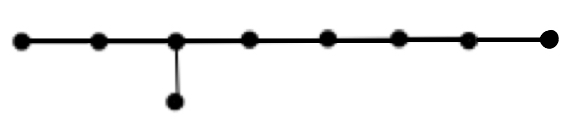}}&$L11n204\{0\}^+=T(3,5,2,1)$\\
10&\raisebox{-0.5\height}{\includegraphics[scale=0.13]{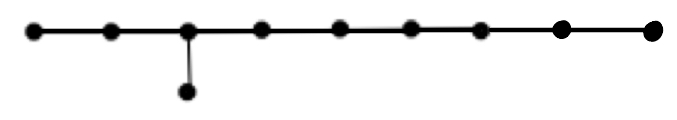}} & $12n_{242}$\\
\end{tabular}
\caption{$E_n$ Dynkin Diagrams and the corresponding Coxeter links $P(3,-2,n-3)$. For $n=4,5$ the diagrams coincide with the ones of $A_4$ and $D_5$, respectively.}\label{Tab:E_n}
\end{table}

This correspondence can be also observed via the HOMFLY--PT polynomial and its HZ transform.
It is possible to define the  HOMFLY--PT polynomial $P(L_n)$ of an ADE 
type Dynkin diagram $L_n$, viewed as a forest quiver,  via the recursive equation\footnote{This comes along with several normalisation conditions, as explained in \cite{Schwartz}, which are omitted here for simplicity.} \cite{Galashin, Schwartz} 
\be\label{PLn}
P(L_n)= \frac{z}{a}P(L_{n-1})+\frac{1}{a^2}P(L_{n-2}).
\ee
 For the generalised exceptional group 
 $P(E_n)$ admits a HZ-factorisability, as expected. In particular,  for $n\geq 4$,
\be\label{Z(En)}
Z(E_n)= \frac{\lambda(1-q^{-n-11}\lambda)(1- (-1)^n q^{-3n + 3}\lambda)}{(1-q^{-n+1}\lambda)(1-q^{-n-1}\lambda)(1-q^{-n-3}\lambda)(1-q^{-n-5}\lambda)},
\ee
which satisfies $Z(E_n)= Z(P(3,-2,n-3))$.  This can be expressed via the character expansion $Z(E_n) = \sum h^Q Z(\hat
S_Q)$, using (\ref{ZC3_1}) and the Racah coefficients 
\be
h^{[3]}= q^{n+2},\hskip 2mm
h^{[21]}=-\frac{q^{n-7}+q^{-n+7}}{q+q^{-1}},\hskip 2mm
h^{[111]}= (-q^{-1})^{n+2}.
\ee

The HZ transform of  $P(A_n)$ for $n\ge 0$, is  evaluated as\footnote{Note that at $n=4$, corresponding to the 2-strand knot $5_1$, a factor $(1-\la q^{-9})$ in (\ref{Z(En)}) cancels between the denominator and numerator, yielding the same result as $Z(A_4)$.}
\be
Z(A_n)= \frac{\lambda(1-(-1)^n q^{-3n-3}\lambda)}{(1-q^{-n+1}\lambda)(1-q^{-n-1}\lambda)(1-q^{-n-3}\lambda)},
\ee
which is the same as $Z(T(2,n+1))$ for the 2-stranded torus links (these are knots for even $n$) which is factorised. 
These are obtained as the Coxeter links of  Fig.~\ref{fig:coxeterlink} with $p=3,q=1,r=n-2.$

The forest quiver polynomial for the $D_n$ series  is considered 
in \cite{Galashin}. For $D_2$ it is argued to be 
\be
P(D_2)= \biggl( \frac{z+z^{-1}}{a} -\frac{z^{-1}}{a^3}\biggr)^2,
\ee
in which the square corresponds to double covering.
This does not yield a factorisable HZ, but instead 
\be\label{ZD2}
Z(D_2)=\frac{\lambda(1+(q^{-11}+ q^{-9}+q^{-7}+q^{-5})\lambda+ q^{-16}\lambda^2)}{(1-q^{-1}\lambda)(1-q^{-3}\lambda)(1-q^{-5}\lambda)(1-q^{-9}\lambda)}.
\ee
The Dynkin diagram for $D_3$ is the same as that of $A_3$, and consequently $P(D_3)=P(A_3)= (z^3+ 3 z + z^{-1}) a^{-3} - (z+ z^{-1}) a^{-5}$, which has HZ transform
\be\label{ZD3}
Z(D_3)= Z(A_3)= \frac{\lambda (1+q^{-12} \la)}{(1-q^{-2} \la)(1-q^{-4} \la)(1-q^{-6}\la)}.
\ee
This is the same as the HZ for the the link
$T(2,4)
=L4a1\{1\}$ (up to $q\to q^{-1}$). 
Since $D_5=E_5$, we have $Z(D_5)=Z(E_5)$, given by (\ref{Z(En)}).
Using the recursive relation $P(D_5)= \frac{z}{a}P(D_4)+\frac{1}{a^2}P(D_3)$, we can deduce $Z(D_4)$ as
\ba\label{ZD4}
Z(D_4)&=&\frac{\lambda(1+ 2 q^{-13} \lambda + 2 q^{-11} \lambda + q^{-24} \lambda^2)}{(1-q^{-3} \la)(1-q^{-5}\lambda)(1-q^{-7}\lambda)(1-q^{-9}\lambda)}\nonumber\\
&=& \frac{\lambda [ \frac{3}{2}(1+q^{-13}\lambda)(1+q^{-11}\lambda)-\frac{1}{2}(1-q^{-13}\lambda)(1-q^{-11}\lambda)]}{(1-q^{-3} \la)(1-q^{-5}\lambda)(1-q^{-7}\lambda)(1-q^{-9}\lambda)}.
\ea
The second line is the factorised form decomposition for $Z(D_4)$, 
 with fractional coefficients that sum to $1$. It is the same as the HZ of the 3-component link $L6n1\{0,1\}=T(3,3)$, given in Eq.~(28) of II. This is indeed the Coxeter link corresponding to $D_4$, for which $p=q=r=2$.
 Similarly,  $Z(D_6)$ can be evaluated from $P(D_6)= z a^{-1}P(D_5)+a^{-2}P(D_4)$, with $P(D_5)=P(E_5)$, to be
\be\label{ZD6}
Z(D_6) = \frac{\lambda( 1 + (q^{-19}+ q^{-17}+q^{-15} + q^{-13})\lambda + q^{-32}\lambda^2)}{
(1-q^{-5}\lambda)(1-q^{-7}\lambda)(1-q^{-9}\lambda)(1-q^{-11}\lambda)}.
\ee
This admits the factorised form decomposition, similar to (\ref{ZD4}),
\ba
\hspace{-13mm}Z(D_6) \hspace{-2mm}&=&\hspace{-2mm} \frac{\lambda}{\mathcal{D}_3}\bigg(\frac{3}{4}[(1+q^{-19}\lambda)(1+q^{-13}\lambda)+(1+q^{-17}\lambda)(1+q^{-15}\lambda)]\nonumber\\
&&\hspace{6mm}-\frac{1}{4}[(1-q^{-19}\lambda)(1-q^{-13}\lambda)+(1-q^{-17}\lambda)(1-q^{-15}\lambda)]\bigg),
\ea
where $\mathcal{D}_3$ is the denominator of (\ref{ZD6}). The
Alexander polynomial corresponding to $D_6$ becomes $\Delta (D_6;q^2)= q^6-q^4-q^{-4} + q^{-6}$.

\section{ Summary and discussion\label{sec:summary}}

In the present article we have shown that  the character expansion  is very effective in revealing the hidden structure of  the HOMFLY--PT polynomial and its HZ transform. This is due to the fact that the transform applies only to the Schur functions $\hat S_Q$, leaving the coefficients invariant, which helps illuminate the factorisability properties of the HZ function.
Namely, it provides sufficient conditions for HZ factorisation in terms of the Racah coefficients. These include that non-vanishing contributions should  come solely from  single hook Young diagrams.  In the previous article \cite{Petrou1}, we have constructed special families of HZ-factoralisable knots and links, which are generated by full twists and Jucys--Murphy twists. Preservation of HZ factorisability under such twists is clarified in the light of the  character expansion and more general families of HZ-factorisable knots, which are thought of as a hyperbolic extension of torus knots, are rigorously constructed.  Among them,  of particular interest is the family of pretzel links $P(3,-2,n-3)$, which are the Coxeter links corresponding to $E_n$ type singularities. 

As already mentioned in \cite{Petrou1}, HZ factorisation is an important property from a physics point of view, as it is equivalent to the vanishing
2-crosscap BPS invariants $\hat{N}_{g,Q}^{c=2}=0$. The number of BPS states of open topological strings, which correspond to knot invariants via the gauge/string duality, have been discussed  in \cite{Ooguri, Labastida}.
The vanishing of the 2-crosscap BPS invariants for torus knots and their hyperbolic extension, is a peculiar property whose physical interpretation is still mysterious and hence deserves further investigation.

In more general cases,  for which the HZ function does not admit a factorised form,  we have shown that it still has an interesting structure, as  it can be  decomposed into a sum of  factorised terms. This is proven in the 3-strand case using the properties of the Racah coefficient, which is symmetric owing to the symmetry of the Young diagram. We have proposed an algorithm with which such a decomposition can be obtained for knots with up to 8 strands. 
The factorised form decomposition 
will be useful for the investigation of the real zeros of the HZ function of non-factorised cases, such as the figure-eight knot. These yield Salem numbers, which can be thought of as Lyapunov exponents for a dynamical system. Further details about Salem numbers in relation to the HZ transform will be discussed in \cite{Petrou5}.

  \paragraph{Acknowledgments.}  We are very grateful to Andrew Lobb for his insightful comments about the Jucys--Muprhy twists. 
This work is supported by OIST funding  and by the collaboration fund between the University of Tokyo and OIST.  

\newpage
\addcontentsline{toc}{section}{Appendix. Examples of factorised form decomposition}
\section*{Appendix. Examples of factorised form decomposition}

\vskip 2mm
The exponents of $A$ in the unnormalised HOMFLY--PT polynomial $\bar{H}(\mathcal{K};A,q)$, which coincide with the exponents of $q$ in the denominator of the HZ transform, will be denoted as $(\beta_0,...,\beta_m)$.
For example, the unnormalised HOMFLY--PT polynomial of the unknot $0_1$ is $(A-A^{-1})/(q-q^{-1})$ so the exponents are $(1,-1)$.  
We list below the set of exponents, which correspond to its HZ type, for each knot with up to 7-crossings, along with their decomposition as a sum of factorised terms, according to conjecture~\ref{conj:FactForm}, introduced in Sec.~\ref{sec:factform}.
The sum of the exponents  in each type  $(\beta_0,...\beta_m)$, is equal to the sum of the numbers that appear in each factor $[a_1,...,a_{m-1}]$ of the factorised form, i.e. $\sum_{i=0}^{m-2}a_i=\sum_{i=0}^m\beta_i$.
\ba\label{A}
&&0_1 = (1,-1),\nonumber\\
&&3_1= (5,3,1), \hskip 20mm Z(3_1)= [9]\nonumber\\
&&4_1= (3, 1,-1,-3), \hskip 11mm Z(4_1)=-[-5,5]+[-3,3]+[-1,1] \nonumber\\
&&5_1= (7,5,3),\hskip 21mm Z(5_1) = [15]\nonumber\\
&&5_2= (7,5,3,1), \hskip 18mm Z(5_2) = [13,3]\nonumber\\
&&6_1=(5,3,1,-1,-3), \hskip 9mm Z(6_1) = -[9,1,-5]+[5,3,-3]+[3,1,1]\nonumber\\
&&6_2=(5,3,1,-1), \hskip 15mm Z(6_2) = -[11,-3]+ [9,-1] + [5,3]\nonumber\\
&&6_3=(3,1,-1,-3),\hskip 13mm Z(6_3) = [-7,7]-[-5,5]+[-3,3]\nonumber\\
&&7_1=(9,7,5), \hskip 22mm Z(7_1)= [21]\nonumber\\
&&7_2=(9,7,5,3,1),\hskip 15mm Z(7_2)= [17,5,3]-[11,9,5]+[9,9,7]\nonumber\\
&&7_3=(-3,-5,-7,-9),\hskip 8mm Z(7_3)= [-19,-5]+[-15,-9]-[-13,-11]\nonumber\\
&&7_4=(-1,-3,-5,-7,-9), \hskip 2mm Z(7_4)= [-17,-5,-3]+[-13,-9,-3]\nonumber\\
&& \hskip 55mm -[-11,-9,-5]\nonumber\\
&&7_5=(9,7,5,3),\hskip 19mm Z(7_5)= [19,5] - [17,7] + [15,9]\nonumber\\
&&7_6=(7,5,3,1,-1),\hskip 13mm Z(7_6) = [13,3,-1]-[11,7,-3]+[9,7,-1]\nonumber\\
&&7_7=(3,1,-1,-3,-5),\hskip 8 mm Z(7_7) = - [-9,1,3]-[-9,-1,5]+ [-9,-3,7]\nonumber\\
&& \hskip 56mm -[-7,-3,5]+ [-3,-3,1]+ 2[-7,-1,3]\nonumber\\
\nonumber
\ea

\newpage


\begin{thebibliography}{99}
\bibitem{Witten0}
E. Witten, Quantum field theory and the Jones polynomial, Commun. Math. Phys. 121 (1989) 351-399.
\bibitem{Reshetikhin}
N. Yu. Reshetikhin and V. G. Turaev, Ribbon graphs and their invariants derived from quantum groups, Comm. Math. Phys. 127 (1990) 1. 
\bibitem{MorozovCH}
A. Mironov, A. Morozov and And. Morozov, Character expansion for HOMFLY polynomials I. Integrability and difference equations, 
Strings, Gauge Fields, and the Geometry Behind, pp.101(2012), World Scientific. arXiv:1112.5754.
\bibitem{MorozovCHII}
A. Mironov, A. Morozov and And. Morozov, Character expansion for HOMFLY polynomials II: Fundamental representation. Up to five strands in braid. JHEP 03 (2012) 034, arXiv: 1112.2654.
\bibitem{Anokhina}
A. Anokina, A. Mironov, A. Morozov and And. Morozov,
Racah coefficients and extended
HOMFLY polynomials for all 5-,6-,7-strand braids, Nucl. Phys. B868 (2013) 271. arXiv: 1207.0279.
\bibitem{And}
And. Morozov, Multistrand eigenvalue conjecture and Racah symmetries, arXiv: 2212.01289.


\bibitem{Petrou}
A. Petrou and S. Hikami, The Harer--Zagier transform of the HOMFLY--PT polynomial for families of twisted hyperbolic knots, J. Phys. A: Math. Theor. 57 (2024) 205204. arXiv:2307.05919. (denoted as I).
\bibitem{Petrou1}
A. Petrou and S. Hikami, 
The HOMFLY--PT polynomial and HZ factorisation, arXiv: 2412.04933. (denoted as II).
\bibitem{Petrou4}
A. Petrou and S. Hikami, A relation between the HOMFLY-PT and Kauffman polynomials via characters, arXiv: 2603.03628.
\bibitem{Petrou5}
A. Petrou and S. Hikami,
On zeros of knot polynomials  and Salem numbers, in a preparation.
\bibitem{KauffmanLins}
L.Kauffman and S. Lins, Temperley-Lieb recoupling theory and invariants of 3-manifold, Annals of Math Studies 134 (1994).
\bibitem{Kirillov}
A.N. Kirillov and N.Yu. Reshetikhin, Representations of the algebra $U_q(sl(2))$, q-orthogonal polynomials and invariants of links, in "Infinite dimensional Lie algebras and groups, edit. by V. G. Kac, in Advanced series in mathematical physics Vol.7, (1989) World Scientific. 
\bibitem{ItzyksonZuber}
C. Itzykson and J.-B. Zuber, The planar approximation, II. J. Math. Phys. 21 (1980) 411.


\bibitem{Bar-Natan}
Knot Atlas, http://katlas.org/wiki/Main Page (by D. Bar-Natan).


 


\bibitem{BrezinHikami}
E. Br\'ezin and S. Hikami, Random matrix theory with an external source, Springer Brief in Mathematical Physics 19 (2016), Springer.






\bibitem{KnotInfo}
KnotInfo: https://knotinfo.math.indiana.edu/index.php
\bibitem{Murasugi}
K. Murasugi, Knot theory and its applications,  Birkh\"auser (1996).

\bibitem{Galashin}
P. Galashin and T. Lam,
Plabic links, quivers, and skein relations,  Combinatorics 7 (2024) 431. arXiv: 2208.01175.
\bibitem{Schwartz}
A. Schwartz, The HOMFLY polynomial of a forest quiver, 
Seminaire Lotharingien de combinatoire 93B (2025) 146,  Proceedings of the 37th conference on formal power series and algebraic combinatorics (Sapporo). arXiv: 2410.00399.
\bibitem{Eriko}
E. Hironaka, Lehmer's problem, Mckay's correspondence, and  2,3,7. Topics in algebraic and noncommutative geometry, Contem. Math. 324 (2001) 123.
arXiv: math/0204040.
\bibitem{ACampo}
N. A’Campo, Le groupe de monodromie du déploiement des singularités isolées de courbes planes II, Proceedings of the International Congress of Mathematicians, Vol. 1 (1974) 375.


\bibitem{Morozov}
A. Morozov, A. Popolitov and Sh. Shakirov, Harer--Zagier formulas for knot matrix model. Phys. Lett. B818, (2021) 136370.
arXiv: 2102.11187.
\bibitem{Ooguri}
H. Ooguri and C. Vafa, Knot invariants and topological strings, Nucl. Phys.B 577  (2000)419.  hep-th/9912123.
\bibitem{Labastida}
J.M.F. Labastida, M. Marino and C. Vafa, Knots, links and branes at large N, JHEP 11(2000) 007, arXiv.hep-th/0010102.



 



















\bibitem{HZ}
J. Harer and D. Zagier, Euler characteristics of the moduli space of curves. Inv. Math. 85 (1986) 457.




\bibitem{Morozov2}
A. Morozov and Sh. Shakirov, From Brezin--Hikami to Harer--Zagier formulas for Gaussian correlators, arXiv:1007.4100.







 
















 \bibitem{MorozovU}
A. Mironov, Mkrtchyan and A. Morozov, On universal knot polynomials, JHEP 02 (2016) 078. arXiv:1510.05884.



































\end{thebibliography}
\end{document}